\newcommand{\pushright}[1]{\ifmeasuring@#1\else\omit\hfill$\displaystyle#1$\fi\ignorespaces}
\newcommand{\pushleft}[1]{\ifmeasuring@#1\else\omit$\displaystyle#1$\hfill\fi\ignorespaces}
\theoremstyle{plain}
\newtheorem{theorem}{Theorem}
\newtheorem{corollary}[theorem]{Corollary}
\newtheorem{lemma}[theorem]{Lemma}
\theoremstyle{definition}
\newtheorem{definition}[theorem]{Definition}
\newtheorem{example}[theorem]{Example}
\newlist{props}{enumerate}{1}
\setlist[props]{label={\rm (\roman*)}}
\renewcommand{\vec}[1]{\mathbf{#1}}
\newcommand{\posreals}{\mathbb{R}_{+}}
\DeclareMathOperator*{\expect}{\mathbb{E}}
\newcommand{\restrict}[2]{#1\big|_{#2}}
\newcommand{\va}{\vec{a}}
\newcommand{\vb}{\vec{b}}
\newcommand{\vc}{\vec{c}}
\newcommand{\vt}{\vec{t}}
\newcommand{\vs}{\vec{s}}
\newcommand{\vx}{\vec{x}}
\newcommand{\vy}{\vec{y}}
\newcommand{\vz}{\vec{z}}
\newcommand{\vbx}{\vec{\bar{x}}}
\newcommand{\vby}{\vec{\bar{y}}}
\newcommand{\vo}{\vec{o}}
\newcommand{\opt}{\vo}
\newcommand{\marg}[3]{#1_{#2,#3}}
\newcommand{\defequals}{\ensuremath{\stackrel{\textrm{def}}{=}}}
\newcommand{\ignore}[1]{}
\newcommand{\oi}[1]{\vo^{(#1)}}
\newcommand{\si}[1]{\vs^{(#1)}}
\newcommand{\ui}[1]{U^{(#1)}}
\newcommand{\mmin}{\ensuremath{\mathrm{min}}}
\newcommand{\mmax}{\ensuremath{\mathrm{max}}}
\newcommand{\mminn}{\ensuremath{\mathrm{min_0}}}
\newcommand{\mmaxn}{\ensuremath{\mathrm{max_0}}}
\newcommand{\idn}{\ensuremath{\mathrm{id_0}}}
\newcommand{\chr}[1]{\llbracket #1 \rrbracket}
\newcommand{\unit}[1]{\mathbf{1}_{#1}}
\newcommand{\fuv}{f^{(u,v)}}
\begin{document}

\title{\Large Maximizing $k$-Submodular Functions and Beyond\thanks{An extended
abstract of this work appeared in the
\emph{Proceedings of the 25th Annual ACM-SIAM Symposium on Discrete Algorithms
(SODA)}, 2014~\cite{wz14:soda}.}}
\author{Justin Ward\thanks{J.W. was supported by EPSRC grants EP/J021814/1 and
EP/D063191/1 (work performed while the author was at the University of Warwick, United Kingdom).}
\\ 
School of Computer and Communication Sciences, EPFL, Switzerland\\
\texttt{justin.ward@epfl.ch}
\and
Stanislav \v{Z}ivn\'y\thanks{S.\v{Z}. was supported by a Royal Society
University Research Fellowship.}\\
Department of Computer Science, University of Oxford, UK\\
\texttt{standa.zivny@cs.ox.ac.uk}}

\date{}

\maketitle


\begin{abstract} 

  We consider the maximization problem in the value oracle model of functions
  defined on $k$-tuples of sets that are submodular in every orthant and
  $r$-wise monotone, where $k\geq 2$ and $1\leq r\leq k$. We give an analysis of
  a deterministic greedy algorithm that shows that any such function can be
  approximated to a factor of $1/(1+r)$. For $r=k$, we give an analysis of a
  randomised greedy algorithm that shows that any such function can be
  approximated to a factor of $1/(1+\sqrt{k/2})$.

  In the case of $k=r=2$, the considered functions correspond precisely to
  bisubmodular functions, in which case we obtain an approximation guarantee of
  $1/2$. We show that, as in the case of submodular functions, this result is
  the best possible in both the value query model, and under the assumption that
  $NP\neq RP$.

  Extending a result of Ando et al., we show that for any $k\geq 3$
  submodularity in every orthant and pairwise monotonicity (i.e. $r=2$)
  precisely characterize $k$-submodular functions. Consequently, we obtain an
  approximation guarantee of $1/3$ (and thus independent of $k$) for
  the maximization problem of $k$-submodular functions.

\end{abstract}

\section{Introduction}\label{sec:intro}
Given a finite nonempty set $U$, a set function $f:2^U\rightarrow\posreals$ defined on subsets of $U$ is called \emph{submodular} if for all $S,T\subseteq U$,
\[
f(S)+f(T)\ \geq\ f(S\cap T)+f(S\cup T).
\]
Submodular functions are a key concept in operations research and combinatorial optimization~\cite{Nemhauser88:optimization,Narayanan97:submodular,Topkis98:Supermodularity,Schrijver03:CombOpt,Fujishige05:submodular2nd,Korte07:combopt,Iwata08:sfm-survey}.  Examples of submodular functions include cut capacity functions, matroid rank functions, and entropy functions. Submodular functions are often considered to be a discrete analogue of convex functions~\cite{Lovasz83:submodular}.

Both minimizing and maximizing submodular functions have been considered extensively in the literature, in both constrained and unconstrained settings.  Submodular function maximization is easily shown to be NP-hard~\cite{Schrijver03:CombOpt} since it generalizes many standard NP-hard problems such as the maximum cut problem~\cite{Garey79:intractability,Vondrak11:max-non-monotone}.
In contrast, the problem of minimizing a submodular function can be solved efficiently with only polynomially many evaluations of the function~\cite{Iwata08:sfm-survey} either by using the ellipsoid algorithm~\cite{Grotschel81:ellipsoid,Grotschel88:geometric}, or by using one of several combinatorial algorithms that have been obtained in the last decade~\cite{Schrijver00:submodular,Iwata01:submodular,Iwata02:submodular,Iwata03:scaling,Orlin09:faster,Iwata09:simple}.

Following a question by Lov\'asz~\cite{Lovasz83:submodular}, a generalization of submodularity to biset functions has been introduced.
Given a finite nonempty set $U$, a function $f:3^U\rightarrow\posreals$ defined on pairs of disjoint subsets of $U$ is called \emph{bisubmodular} if for all pairs $(S_1,S_2)$ and $(T_1,T_2)$ of disjoint subsets of $U$,
\begin{equation*}
f(S_1,S_2)+f(T_1,T_2)
 \geq
f((S_1,S_2)\sqcap (T_1,T_2))+f((S_1,S_2)\sqcup
(T_1,T_2)),
\end{equation*}
where we define
\begin{align*}
(S_1,S_2)\sqcap(T_1,T_2) &= (S_1\cap T_1, S_2\cap T_2),
\\ \intertext{and}
(S_1,S_2)\sqcup(T_1,T_2) &= ((S_1\cup T_1)\setminus(S_2\cup T_2),(S_2\cup T_2)\setminus(S_1\cup T_1)).
\end{align*}
Bisubmodular functions were originally studied in the context of rank functions
of delta-matroids \cite{Bouchet87:greedy,Chandrasekaran88:pseudomatroids}.
Bisubmodularity also arises in bicooperative games \cite{Bilbao08:survey} as
well as variants of sensor placement problems and coupled feature selection
problems \cite{Singh12:bisubmodular}. The minimization problem for bisubmodular
functions using the ellipsoid method was solved in~\cite{Qi88:directed}.
More recently, combinatorial~\cite{Fujishige06:bisubmodular} and strongly combinatorial~\cite{McCormick10:bisubmodular} algorithms for maximizing bisubmodular functions have been developed.

In this paper, we study the natural generalization of submodular and bisubmodular functions: given a natural number $k\geq 1$ and a finite nonempty set $U$, a function $f:(k+1)^U\rightarrow\posreals$ defined on $k$-tuples of pairwise disjoint subsets of $U$ is called \emph{$k$-submodular} if for all $k$-tuples $S=(S_1,\ldots,S_k)$ and $T=(T_1,\ldots,T_k)$ of pairwise disjoint subsets of $U$,
\[
f(S)+f(T)
\ \geq\ 
f(S\sqcap T)+f(S\sqcup T),
\]
where we define
\begin{align*}
S\sqcap T &= (S_1\cap T_1,\ldots, S_k\cap T_k), \\
\intertext{and}
S\sqcup T &= ((S_1\cup T_1)\setminus\smashoperator{\bigcup_{i\in\{2,\ldots,k\}}}\,(S_i\cup T_i),\ldots,(S_k\cup T_k)\setminus\smashoperator{\bigcup_{i\in\{1,\ldots,k-1\}}}\,(S_i\cup T_i)).
\end{align*}
Under this definition, 1-submodularity corresponds exactly to the standard notion of submodularity for set functions, and
similarly 2-submodularity corresponds to bisubmodularity.  (We note that Ando has used
the term $k$-submodular to study a different class of functions~\cite{Ando02:k-submodular}.)

\subsection{Related work}

The terminology for $k$-submodular functions was first introduced in~\cite{huber12:ksub}
but the concept has been studied previously in~\cite{cohen06:complexitysoft}. The concept of $k$-submodularity is a special case of strong tree submodularity~\cite{Kolmogorov11:mfcs} with the tree being a star on $k+1$ vertices.

To the best of our knowledge, it is not known whether the ellipsoid method can
be employed for minimizing $k$-submodular functions for $k\geq 3$ (some partial
results can be found in~\cite{huber12:ksub}), let alone whether there is a
(fully) combinatorial algorithm for minimizing $k$-submodular functions for
$k\geq 3$. However, it has recently been shown that explicitly given
$k$-submodular functions can be minimized in polynomial
time~\cite{tz12:focs}\footnote{In fact, results in \cite{tz12:focs} imply that
much larger classes of functions can be minimized in polynomial time, including
as one special case functions that are (strong) tree submodular, which in turn
includes $k$-submodular functions.}, and these results have proved useful in the
design of fixed-parameter algorithms~\cite{Wahlstrom14:soda}.

Some results on maximizing special cases of bisubmodular functions have appeared in Singh, Guillory, and Bilmes~\cite{Singh12:bisubmodular}, who showed that simple bisubmodular function can be represented as a matroid constraint and a single submodular function, thus enabling the use of existing algorithms in some special cases.  Unfortunately they show that this approach may require that the submodular function take negative values and so the approach does not work in general. (We note that our definition of bisubmodularity corresponds to directed bisubmodularity in~\cite{Singh12:bisubmodular}.)

A different generalization of bisubmodularity, called skew bisubmodularity, has
proved important in classifying finite-valued CSPs on domains with three
elements~\cite{hkp14:sicomp}; this result was then generalized by a complexity classification of finite-valued CSPs on domains of arbitrary size~\cite{tz13:stoc}. Explicitly given skew bisubmodular functions can be minimized efficiently by results of Thapper and \v{Z}ivn\'y~\cite{tz12:focs}. The general question of whether all bisubmodular, and, more generally, $k$-submodular functions can be approximately maximized was left open.

\subsection{Contributions}

Following the question by Lov\'asz~\cite{Lovasz83:submodular} of whether there
are generalizations of submodularity that preserve some nice properties such as
efficient optimization algorithms,
we consider the class of functions that are submodular in every
orthant and $r$-wise monotone (the precise definition is given in
Section~\ref{sec:prelim}), which includes as special cases bisubmodular and
$k$-submodular functions.

Specifically, we consider the problem of \emph{maximizing} bisubmodular and, more generally, $k$-submodular functions in the \emph{value oracle model}.  We provide the first approximation guarantees for maximizing a general bisubmodular or $k$-submodular function.  

In Section~\ref{sec:char}, we prove that for any $k\geq 2$, $k$-submodular
functions are precisely the $k$-set functions that are submodular in every
orthant and pairwise monotone, thus extending the result
from~\cite{ando96:bisubmodular} that showed this result for $k=2$.

In Section~\ref{sec:naive-rand-algor}, we show that the naive random algorithm
that simply returns a random partition of the ground set $U$ is
$1/4$-approximation for maximizing any bisubmodular function and a
$1/k$-approximation for maximizing a $k$-submodular function with $k \ge 3$. We
also show that our analysis is tight.

In Section~\ref{sec:greedy-det}, we show that a simple greedy algorithm for
maximizing $k$-set functions that are submodular in every orthant and $r$-wise
monotone for some $1\leq r\leq k$ achieves a factor of $1/(1+r)$. We also show
that our analysis is tight. Consequently,
this algorithm achieves a factor of $1/3$ for maximizing $k$-submodular
functions.

In Section~\ref{sec:greedy-algorithm}, we develop a randomized greedy algorithm
for maximizing $k$-set functions that are submodular in every orthant and
$k$-wise monotone. The algorithm is inspired by the algorithm of Buchbinder et
al.~\cite{buchbinder12:focs} for unconstrained submodular maximization.  We show
that this algorithm approximates any such $k$-set function to a factor of
$1/(1+\sqrt{k/2})$.

Finally, in Section \ref{sec:conclusion}, we relate our results on bisubmodular
functions and existing results on submodular functions via a known embedding of
submodular functions into bisubmodular functions.  Using this embedding we can
translate inapproximability results for submodular function into analogous
results for bisubmodular functions.  Moreover, we show that the algorithm of
Buchbinder et al.~\cite{buchbinder12:focs} may be viewed as a special case of
our algorithm applied to this embedding.

Recently, Iwata, Tanigawa, and Yoshida \cite{Iwata13:max} have independently obtained a $1/k$-approximation
algorithm for maximizing $k$-submodular functions.  Here we improve this factor
to $1/3$, while also considering several other algorithms and generalizations of
$k$-submodular functions.  

%
\section{Preliminaries} \label{sec:prelim}

We denote by $\posreals$ the set of all non-negative real numbers.  Let $U$ be a ground set containing $n$ elements and $k \ge 1$ be a fixed integer.  We consider functions that assign a value in $\posreals$ to each partial assignment of the values $\{1,\ldots,k\}$ to the elements of $U$.  We can represent each such partial assignments as vectors $\vx$ in $\{0,\ldots,k\}^U$, where we have $x_e = 0$ if element $e \in U$ is not assigned any value in $\{1,\ldots,k\}$, and otherwise have $x_e$ equal to the value assigned to $e$.  It will be useful to consider the partial assignment obtained from another (possibly partial) assignment $\vx$ by ``forgetting'' the values assigned to all elements except for some specified set $S \subseteq U$.  We represent this as the vector $\restrict{\vx}{S}$ whose coordinates are given by $\left(\restrict{\vx}{S}\right)_e = x_e$, for all $e \in S$ and $\left(\restrict{\vx}{S}\right)_e = 0$ for all $e \in U \setminus S$.  Note that $\restrict{\vx}{S}$ is similar to the projection of $\vx$ onto $S$, but we here require that all coordinates $e \not\in S$ be set to 0, while the standard notion of projection would remove these coordinates from the resulting vector.  In particular, this means that $\restrict{\vx}{S}$ and $\vx$ both have $n$ coordinates.

In order to relate our results to existing work on submodular functions, we shall also use terminology from set functions.  In this setting, we consider \emph{$k$-set functions}, which assign a value to each tuple of $k$ disjoint sets $S = (S_1,\ldots,S_k)$, where $S_i \subseteq U$ and $S_i \cap S_j = \emptyset$ for all $i \neq j$.  It is straightforward to check that the two notions are equivalent by having $e \in S_i$ if and only if $x_e = i$.  Then, we have $x_e = 0$ if and only if $e$ does not appear in any of the sets $S_1,\ldots,S_k$.

The solution space over which we optimize our functions is thus the set of partitions of some subset $U' \subseteq U$ into $k$ disjoint sets, where in our vector notation $U'$ is equivalent to the set of coordinates in $\vx$ that are non-zero.  We shall refer to a partition of the entire ground set $U$ as an \emph{orthant} of $U$, and use the word \emph{partial solution} to refer to a partition of some subset of $U$, to emphasize that they may not necessarily assign every element in $U$ to a set.  Given a partial solution $\vs$ and an orthant $\vt$, we say that $\vs$ is in orthant $\vt$ if $\vs = \restrict{\vt}{A}$ for some set $A \subseteq U$.  That is, $\vs$ is in orthant $\vt$ if and only if $\vs$ agrees with $\vt$ on all non-zero values.

Consider the operations $\mminn$ and $\mmaxn$ given by
\begin{align*}
\mminn(s,t)\ &\defequals\ \begin{cases} 0, & s \neq 0, t \neq 0, s \neq t \\
\mmin(s,t), & \text{otherwise}
\end{cases}
\intertext{and} \\
\mmaxn(s,t)\ &\defequals\ \begin{cases} 0, & s \neq 0, t \neq 0, s \neq t \\
\mmax(s,t), & \text{otherwise,}
\end{cases}
\end{align*}
where $\mmin(s,t)$ (respectively, $\mmax(s,t)$) returns the smaller (respectively, the larger) of $s$ and $t$ with respect to the usual order on the integers. Then, for vectors $\vs$ and $\vt$ in $\{0,\ldots,k\}^U$ we let $\mminn(\vs,\vt)$ (respectively, $\mmaxn(\vs,\vt)$) denote the vector obtained from applying $\mminn$ (respectively, $\mmaxn$) to $\vs$ and $\vt$ coordinate-wise.  Using these operations we can define the general class of $k$-submodular functions:
\begin{definition}\label{def:k-sub}
Given a natural number $k\geq 1$ and a finite nonempty set $U$, a function $f:\{0,\ldots,k\}^U\rightarrow\posreals$ is called \emph{$k$-submodular} if for all $\vs$ and $\vt$ in $\{0,\ldots,k\}^U$,
\begin{equation}\label{eq:k-sub}
f(\vs) + f(\vt)\ \ge\ f(\mminn(\vs,\vt)) + f(\mmaxn(\vs,\vt)).
\end{equation}
\end{definition}
Note that if $\vs$ and $\vt$ are both orthants, then we have $\mminn(\vs,\vt) = \mmaxn(\vs,\vt) = \idn(\vs,\vt)$, where the operation $\idn$ on each coordinate of $\vs$ and $\vt$ is given by $\idn(s,t) =  s = t$ if $s = t$, and $\idn(s,t) = 0$ otherwise.  Thus, if $f$ is a $k$-submodular function, we have
\begin{equation}
\label{eq:k-sub-partition}
f(\vs) + f(\vt)\ \ge\ 2f(\idn(\vs,\vt))
\end{equation}
for any two orthants $\vs$ and $\vt$ of $U$.

\begin{example}
The well-known Max-Cut problem demonstrates that maximizing 
(1-)sub\-mod\-ular functions is NP-hard, even if the objective function is given explicitly~\cite{Garey79:intractability}. We show that the same hardness result holds for any $k\geq 1$.
Consider the function $\fuv : \{0,\ldots,k\}^{\{u,v\}} \to \posreals$ given by\footnote{Here and throughout, we employ the Iverson bracket notation $\chr{p}$ to denote a value that is 1 when statement $p$ is true and 0 when $p$ is false.}
 $\fuv(x_{u},x_{v})=\chr{x_{u} \neq x_{v}}$.  It is easy to check that $\fuv$ is $k$-submodular.  Given a graph $(V,E)$ with $V=\{1,\ldots,n\}$, we consider the function $f(\vx)=\sum_{\{i,j\}\in E}f^{(i,j)}(x_i,x_j)$.  Because $f$ is a positive combination of $k$-submodular functions, it is also $k$-submodular.  Moreover, maximizing $f$ amounts to solving the Max-$k$-Cut problem, which is NP-hard~\cite{Papadimitriou91:jcss}.  \end{example}

While concise, Definition \ref{def:k-sub} gives little intuition in the traditional setting of set functions.  We now consider this setting in order to provide some intuition.  Consider two partial solutions $S = (S_1,\ldots,S_k)$ and $T = (T_1,\ldots,T_k)$ and let $\vs$ and $\vt$ be the vectors in $\{0,\ldots,k\}^U$ representing $S$ and $T$, respectively.  Consider some element $e \in U$.  We have $\mminn(s_e, t_e) = i \neq 0$ precisely when $s_e = t_e = i \neq 0$.  Thus, the vector $\mminn(\vs,\vt)$ in Definition \ref{def:k-sub} corresponds exactly to the coordinate-wise intersection $(S_1 \cap T_1, \ldots, S_k \cap T_k)$ of $S$ and $T$.  Similarly, $\mmaxn(s_e, t_e) = i \neq 0$ precisely when either $s_e = t_e \neq 0$ or when one of $s_e$, $t_e$ is $i \neq 0$ and the other is 0.  Thus, the vector $\mmaxn(\vs,\vt)$ corresponds exactly to the coordinate-wise union of $S$ and $T$ after we have removed any element $e$ occurring in two different sets in $S$ and $T$.  That is, if we set $X_{-i} = \bigcup_{j \neq i} \left(S_j \cup T_j\right)$, then $\mmaxn(\vs,\vt)$ corresponds to $((S_1 \cup T_1) \setminus X_{-1}, \ldots, (S_k \cup T_k) \setminus X_{-k})$.  The removal of $X_{-i}$ from the $i$th union effectively enforces the condition that no element occurs in two different sets in the resulting partial solution.

The following equivalences, first observed by Cohen et al.~\cite{cohen06:complexitysoft}, allow us to relate $k$-submodular functions to existing families of set functions.  When $k = 2$, Definition \ref{def:k-sub} requires that
\begin{equation*}
f(S_1,S_2) + f(T_1,T_2) \ge\ f(S_1\cap T_1,\,S_2 \cap T_2) + f((S_1 \cup T_1) \setminus (S_2 \cup T_2),\,(S_2 \cup T_2) \setminus (S_1 \cup T_2)),
\end{equation*}
which agrees exactly with the definition of bisubmodular functions given in \cite{Fujishige05:submodular2nd}.  When $k = 1$, there is only a single set in each partial solution, and hence a single non-zero value in each corresponding vector, and so $X_{-1} = \emptyset$.  Thus, Definition \ref{def:k-sub} requires that
\[
f(S_1) + f(T_1)\ \ge\ f(S_1 \cap T_1) + f(S_1 \cup T_1),
\]
which agrees exactly with the standard definition of submodular
functions~\cite{Nemhauser88:optimization}.

It is well-known that for standard set functions submodularity is equivalent to the property of \emph{diminishing marginal returns}.  Let $f : 2^{U}\to\posreals$ be a set function on $U$ and define the marginal value of $e$ with respect to $S$ as $f_{e}(S) \defequals f(S \cup \{e\}) - f(S)$ for all $S \subseteq U$ and $e \not\in S$.  Then, $f$ is submodular if and only if
\[
f_{e}(A) \ge f_{e}(B)
\]
for all $A \subseteq B$ and $e \not\in B$.

We shall see that marginal returns also play an important role in characterizing $k$-submodular functions.  In this setting, however, we must specify not only which element we are adding to the solution, but which set in the partition it is being added to.
For a $k$-set function function $f : \{0,\ldots,k\}^U \to \posreals$, an element $e \in U$, and a value $i \in \{1,\ldots,k\}$, we define the marginal value $\marg{f}{i}{e}(S)$
by
\begin{equation*}
\marg{f}{i}{e}(S) \defequals f(S_1,\ldots,S_{i-1},S_i\!\cup\!\{e\},S_{i+1},\ldots,S_k) - f(S_1,\ldots,S_k) 
\end{equation*}
for any partial solution $S = (S_1,\ldots,S_k)$ such that $e \not\in S_{i}$ for any $i$.  Equivalently, in vector notation, we have
\begin{equation*}
\marg{f}{i}{e}(\vs) \defequals f(\vs + i \cdot \unit{e}) - f(\vs),
\end{equation*}
where $\vs$ is any partial solution satisfying $s_{e} = 0$, and $\unit{e}$ denotes the unit vector that is 1 in coordinate $e$ and 0 in all other coordinates.

\begin{definition}
\label{def:properties}
Let $k \ge 1$, and $1 \le r \le k$.  We say that a function $f : \{0,\ldots,k\}^{U} \to \posreals$ is:
\begin{itemize}
\item
\emph{submodular in every orthant}, if for any two partial solutions $\va$ and $\vb$ in the same orthant of $U$, $f(\va) + f(\vb) \ge f(\mminn(\va,\vb)) + f(\mmaxn(\va,\vb))$.
\item 
\emph{$r$-wise monotone}, if for any element $e$, any partial solution $\vs$ with $s_{e} = 0$, and any set of $r$ distinct values $I \in \binom{\{1,\ldots,k\}}{r}$:
\[
\sum_{i \in I}\marg{f}{i}{e}(\vs) \ge 0.
\]
\end{itemize}
\end{definition}

We remark that the case of $k=r=1$ corresponds to monotone submodular functions.  In the case of $k=r=2$, Ando, Fujishige, and Naito~\cite{ando96:bisubmodular} have shown that these two properties give an exact characterization of the class of bisubmodular functions.  In Section~\ref{sec:char}, we extend their result by showing that submodularity in every orthant and pairwise monotonicity in fact precisely characterize $k$-submodular functions for all $k \ge 2$.

Let us now give some justification for the terminology ``submodular in every orthant.''  Let  $\vx$ be an orthant of $U$.  Given a $k$-submodular
function $f$, we call set function $h : 2^U \to \posreals$ defined for any
$S\subseteq U$ by
\[
h(S) \defequals f(\restrict{\vx}{S})
\]
the function \emph{induced by} $\vx$ and $f$.  In the language of set functions,
the function $h$ is obtained by first assigning each element $e$ in $U$ to a
single set $X_i$ (where $i = x_e$).  Then, $h(S)$ is simply the value of
$f(S\cap X_1,\ldots,S\cap X_k)$.  We now show $f$ is $k$-submodular in an orthant (in the sense of Definition \ref{def:properties}) if an only if the function $h$ induced by this orthant and $f$ is submodular.
\begin{lemma}
\label{lem:submodularity-orthant}
Let $(X_{1},\ldots,X_{k})$ be an orthant of $U$, with vector representation $\vx$.  Then, $f$ is $k$-submodular in the orthant $\vx$ if and only if the function $h$ induced by $\vx$ and $f$ is submodular.
\end{lemma}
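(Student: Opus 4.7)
The plan is to unpack both sides of the equivalence into statements about the restriction vectors $\restrict{\vx}{S}$ and reduce everything to a single key observation: when two partial solutions lie in the same orthant, the operations $\mminn$ and $\mmaxn$ correspond exactly to intersection and union on the underlying ``supports'' (the coordinates where the vectors are non-zero).

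First I would establish this observation precisely. Fix the orthant $\vx$ and sets $S, T \subseteq U$; I claim
\[
\mminn(\restrict{\vx}{S}, \restrict{\vx}{T}) = \restrict{\vx}{S\cap T} \qquad \text{and} \qquad \mmaxn(\restrict{\vx}{S}, \restrict{\vx}{T}) = \restrict{\vx}{S\cup T}.
\]
This is a coordinate-wise check that splits into four cases according to whether $e$ lies in $S\cap T$, $S\setminus T$, $T\setminus S$, or $U\setminus(S\cup T)$; the crucial point is that the ``conflict'' case of $\mminn$ and $\mmaxn$ (where both coordinates are non-zero but unequal) cannot arise, because both $\restrict{\vx}{S}$ and $\restrict{\vx}{T}$ agree with $\vx$ on their non-zero coordinates.

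With this identity in hand, the ``only if'' direction is immediate: applying the definition of $k$-submodularity in the orthant to $\va = \restrict{\vx}{S}$ and $\vb = \restrict{\vx}{T}$ and rewriting via the identity above yields
\[
h(S) + h(T) = f(\restrict{\vx}{S}) + f(\restrict{\vx}{T}) \ \ge\ f(\restrict{\vx}{S\cap T}) + f(\restrict{\vx}{S\cup T}) = h(S\cap T) + h(S\cup T),
\]
so $h$ is submodular on $2^{U}$.

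For the ``if'' direction, I would use the characterization of partial solutions in the orthant $\vx$ given in the preliminaries: any such $\va$ is of the form $\restrict{\vx}{A}$ for some $A\subseteq U$, and similarly $\vb = \restrict{\vx}{B}$. Submodularity of $h$ applied to $A, B$ together with the same identity then immediately gives $f(\va)+f(\vb)\ge f(\mminn(\va,\vb))+f(\mmaxn(\va,\vb))$. There is no real obstacle here beyond being careful with the case analysis for $\mminn$ and $\mmaxn$; once that lemma is confirmed, both directions follow by direct substitution.
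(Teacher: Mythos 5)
Your proof is correct and follows essentially the same route as the paper's: the paper also observes that, for partial solutions in a common orthant, the conflict branch of $\mminn$ and $\mmaxn$ never fires, so these operations reduce to coordinate-wise $\mmin$ and $\mmax$, and hence to set intersection and union of the supports, giving $h(A\cap B) = f(\mminn(\va,\vb))$ and $h(A\cup B) = f(\mmaxn(\va,\vb))$. Your formulation packages this as the clean identity $\mminn(\restrict{\vx}{S},\restrict{\vx}{T}) = \restrict{\vx}{S\cap T}$ and $\mmaxn(\restrict{\vx}{S},\restrict{\vx}{T}) = \restrict{\vx}{S\cup T}$, which is the same content phrased slightly more explicitly, and both directions then follow by the same direct substitution.
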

\begin{proof}
Let $A$ and $B$ be two subsets of $U$, with associated partial solutions $\va = \restrict{\vx}{A}$ and $\vb = \restrict{\vx}{B}$ in orthant $\vx$.  Then, note that $e \in A \cap B$ if and only $\mmin(a_{e},b_{e})$ is non-zero, and $e \in A \cup B$ if and only if $\mmax(a_{e},b_{e})$ is non-zero.  Moreover, since $\va$ and $\vb$ agree on all non-zero coordinates, we have $\mminn(\va,\vb) = \mmin(\va,\vb)$ and $\mmaxn(\va,\vb) = \mmax(\va,\vb)$.  Hence,
\begin{align*}
h(A \cup B) &= f(\restrict{\vx}{A \cup B}) = f(\mmax(\restrict{\vx}{A}, \restrict{\vx}{B})) =
f(\mmaxn(\restrict{\vx}{A}, \restrict{\vx}{B})) = f(\mmaxn(\va,\vb)), \\
h(A \cap B) &= f(\restrict{\vx}{A \cap B}) = f(\mmin(\restrict{\vx}{A}, \restrict{\vx}{B})) =
f(\mminn(\restrict{\vx}{A}, \restrict{\vx}{B})) = f(\mminn(\va,\vb)).
\end{align*}
Thus, we have 
\[h(A) + h(B) \ge h(A \cap B) + h(A \cup B)\]
for any $A,B \subseteq U$ if and only if
\[
f(\va) + f(\vb) \ge f(\mminn(\va, \vb)) + f(\mmaxn(\va,\vb))
\]
for the associated partial solutions $\va,\vb$ in orthant $\vx$.
\end{proof}

Many of our proofs will use this connection between the standard notion of submodularity and the $k$-set functions in Definition \ref{def:k-sub}.  Specifically, we shall make use of the following result from Lee, Sviridenko, and Vondr\'ak \cite{Lee2010a}, which we restate here.  \begin{lemma}[\protect{\cite[Lemma 1.1]{Lee2010a}}] \label{lem:LSV} Let $f$ be a non-negative submodular function on $U$.  Let $S,C \subseteq U$ and let $\{T_\ell\}_{\ell=1}^t$ be a collection of subsets of $C \setminus S$ such that each element of $C \setminus S$ appears in exactly $p$ of these subsets.  Then \[ \sum_{\ell = 1}^t[f(S \cup T_\ell) - f(S)]\ \ge\ p[f(S \cup C) - f(S)].  \] \end{lemma}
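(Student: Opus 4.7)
The plan is to prove this via the convexity of the Lov\'asz extension of $f$. Recall that for any submodular $f : 2^U \to \mathbb{R}$, the Lov\'asz extension $\hat{f}:[0,1]^U \to \mathbb{R}$ given by $\hat{f}(\vx) = \int_0^1 f(\{e \in U : x_e \ge \theta\})\,d\theta$ agrees with $f$ on indicator vectors (so $\hat{f}(\mathbf{1}_A) = f(A)$) and is convex. I will apply these two facts to the average of the vectors $\mathbf{1}_{S \cup T_\ell}$.

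First, let $\vv = \frac{1}{t}\sum_{\ell=1}^t \mathbf{1}_{S \cup T_\ell}$ and compute its entries. Since each $T_\ell \subseteq C \setminus S$ is disjoint from $S$, every set $S \cup T_\ell$ contains all of $S$, giving $v_e = 1$ for $e \in S$. By hypothesis each $e \in C \setminus S$ lies in exactly $p$ of the $T_\ell$'s, so $v_e = p/t$, and $v_e = 0$ off $S \cup C$. The level sets of $\vv$ are therefore $S \cup C$ on $(0, p/t]$ and $S$ on $(p/t, 1]$, so the integral definition yields
\[
\hat{f}(\vv) \;=\; \tfrac{p}{t}\, f(S \cup C) + \bigl(1 - \tfrac{p}{t}\bigr)\, f(S).
\]

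Next, invoke convexity: $\hat{f}(\vv) \le \frac{1}{t}\sum_\ell \hat{f}(\mathbf{1}_{S \cup T_\ell}) = \frac{1}{t}\sum_\ell f(S \cup T_\ell)$. Chaining these and multiplying through by $t$ produces $p\,f(S \cup C) + (t-p)\,f(S) \le \sum_\ell f(S \cup T_\ell)$, which rearranges to the desired inequality.

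No step here is really hard: the convexity of $\hat{f}$ is the textbook characterization of submodularity, and the rest is bookkeeping. I also note that the proof does not use the non-negativity of $f$; the hypothesis is included because the lemma is later invoked in non-negative contexts, but the inequality itself holds for any submodular $f$. An alternative route would be to prove the statement by induction on $|C \setminus S|$ using raw submodularity inequalities (the case $|C \setminus S|=1$ is immediate), but bundling these applications through $\hat{f}$ is much cleaner.
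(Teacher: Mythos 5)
Your proof is correct, and the verification of $\hat{f}(\vv) = \tfrac{p}{t}f(S\cup C) + (1-\tfrac{p}{t})f(S)$ from the two level sets $S\cup C$ (on $(0,p/t]$) and $S$ (on $(p/t,1]$) is clean; combining this with convexity of the Lov\'asz extension does give exactly the claimed inequality after clearing the factor of $t$. Note, however, that this lemma is cited in the paper from Lee, Sviridenko, and Vondr\'ak and is not reproved there, so there is no ``paper's own proof'' to compare against; the relevant comparison is with the proof in the cited source. That proof is a direct combinatorial telescoping argument: fix an ordering $e_1,\ldots,e_m$ of $C\setminus S$, expand each $f(S\cup T_\ell)-f(S)$ as a sum of single-element marginals $f(S\cup (T_\ell\cap\{e_1,\ldots,e_j\})) - f(S\cup(T_\ell\cap\{e_1,\ldots,e_{j-1}\}))$ over $j$ with $e_j\in T_\ell$, bound each such marginal from below by the corresponding marginal $f(S\cup\{e_1,\ldots,e_j\}) - f(S\cup\{e_1,\ldots,e_{j-1}\})$ via diminishing returns, and then observe that summing over $\ell$ picks up each of these $p$ times and telescopes to $p[f(S\cup C)-f(S)]$. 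Your Lov\'asz-extension route is genuinely different: it replaces the explicit chain with the single invocation of convexity, which is shorter but imports the (standard, though not elementary) fact that submodularity is equivalent to convexity of $\hat{f}$, whereas the telescoping proof uses only the raw diminishing-returns inequality. Both are valid; yours is arguably more conceptual, theirs is more self-contained. Your remark that non-negativity of $f$ is not needed for the lemma itself is also correct --- it is needed only for the corollary the paper derives from it (where $(t-p)f(S)\ge 0$ is dropped).
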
 In fact, the following weaker statement will be sufficient for our purposes: \begin{corollary}[of Lemma~\ref{lem:LSV}] \label{lem:LSVcor} Let $f$ be a non-negative submodular function on $U$.  Let $S,C \subseteq U$ and let $\{T_\ell\}_{\ell=1}^t$ be a collection of subsets of $C \setminus S$ such that each element of $C \setminus S$ appears in exactly $p$ of these subsets.  Then \[ \sum_{\ell = 1}^tf(S \cup T_\ell)\ \ge\ pf(S \cup C).  \] \end{corollary}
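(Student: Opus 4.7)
The plan is to derive the corollary directly from Lemma~\ref{lem:LSV} by rearranging terms and invoking non-negativity.

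First I would apply Lemma~\ref{lem:LSV} under the same hypotheses to obtain
\[
\sum_{\ell=1}^{t} [f(S \cup T_\ell) - f(S)] \;\ge\; p\,[f(S \cup C) - f(S)],
\]
and then move the $f(S)$ terms to the right-hand side to get
\[
\sum_{\ell=1}^{t} f(S \cup T_\ell) \;\ge\; p\, f(S \cup C) + (t - p)\, f(S).
\]

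Next I would argue that $t \ge p$: if $C \setminus S$ is non-empty, then any fixed element of $C \setminus S$ belongs to exactly $p$ of the $T_\ell$, forcing $t \ge p$; if $C \setminus S = \emptyset$, the hypothesis is vacuous and we may take $p = 0$, in which case $t \ge p$ trivially. Combined with the assumption that $f$ is non-negative, this gives $(t-p)\, f(S) \ge 0$, from which the desired inequality
\[
\sum_{\ell=1}^{t} f(S \cup T_\ell) \;\ge\; p\, f(S \cup C)
\]
follows immediately.

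There is no real obstacle here; the only subtlety is confirming $t \ge p$, which is essentially a counting observation, and noting that non-negativity of $f$ (explicitly assumed in both the lemma and the corollary) is exactly what lets us discard the leftover $f(S)$ term. The corollary is therefore just a weakening of Lemma~\ref{lem:LSV}, convenient because it avoids carrying the additive $f(S)$ correction through later calculations.
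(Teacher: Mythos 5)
Your proof is correct and follows essentially the same approach as the paper: apply Lemma~\ref{lem:LSV}, rearrange to obtain $\sum_{\ell} f(S \cup T_\ell) \ge p\,f(S \cup C) + (t-p)f(S)$, and then discard the $(t-p)f(S)$ term using $t \ge p$ and non-negativity of $f$. The only difference is that you spell out the counting argument for $t \ge p$, which the paper simply asserts.
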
 \begin{proof} Add $\sum_{\ell=1}^t f(S)$ to each side of the inequality in Lemma \ref{lem:LSV}.  This gives \begin{align*}
    \sum_{\ell = 1}^tf(S \cup T_\ell) &\ge p\cdot f(S \cup C) - p\cdot f(S) + \sum_{\ell = 1}^t f(S)\\
    &= p\cdot f(S \cup C) + (t - p)\cdot f(S) \\
    &\ge p\cdot f(S \cup C), \end{align*} since $p \le t$.  \end{proof}

\section{Characterization of $k$-Submodularity}
\label{sec:char}

\begin{theorem}
\label{thm:ando-properties}
Let $f:\{0,\ldots,k\}^U\rightarrow\posreals$ be a $k$-set function, where $k\geq 2$.
Then, $f$ is $k$-submodular if and only if $f$ is submodular in every orthant and pairwise monotone.
\end{theorem}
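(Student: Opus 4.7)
The forward direction, that any $k$-submodular $f$ is submodular in every orthant and pairwise monotone, is a direct consequence of applying~\eqref{eq:k-sub} to carefully chosen pairs. For submodularity in an orthant $\vx$, two partial solutions $\va,\vb\in\vx$ agree on every nonzero coordinate, so $\mminn(\va,\vb)$ and $\mmaxn(\va,\vb)$ coincide with the coordinatewise $\mmin(\va,\vb)$ and $\mmax(\va,\vb)$; the $k$-submodular inequality then gives exactly the submodular inequality in $\vx$ via Lemma~\ref{lem:submodularity-orthant}. For pairwise monotonicity at an element $e$ with $s_e=0$ and distinct $i,j\in\{1,\ldots,k\}$, I apply~\eqref{eq:k-sub} to $\vs+i\cdot\unit{e}$ and $\vs+j\cdot\unit{e}$: since $i\neq j$ are both nonzero, $\mminn$ and $\mmaxn$ both return $0$ at $e$ and agree with $\vs$ elsewhere, yielding $f(\vs+i\cdot\unit{e})+f(\vs+j\cdot\unit{e})\ge 2f(\vs)$, i.e., $\marg{f}{i}{e}(\vs)+\marg{f}{j}{e}(\vs)\ge 0$.

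For the converse, I proceed by induction on the size of the conflict set $D(\vs,\vt)=\{e\in U : s_e\neq 0,\,t_e\neq 0,\,s_e\neq t_e\}$. When $|D|=0$, the vectors $\vs,\vt$ sit in a common orthant and $\mminn,\mmaxn$ reduce to $\mmin,\mmax$, so the inequality follows from submodularity in that orthant (Lemma~\ref{lem:submodularity-orthant}). For the inductive step with $|D|\ge 1$, I fix a conflict coordinate $e$ with $s_e=i$, $t_e=j$ and write $\vs^{(0)},\vt^{(0)}$ for the partial solutions obtained by zeroing the $e$-coordinate. Each of the three pairs $(\vs^{(0)},\vt)$, $(\vs,\vt^{(0)})$, and $(\vs^{(0)},\vt^{(0)})$ has strictly smaller conflict set, and one computes that their $\mminn$-values all coincide with $\vy:=\mminn(\vs,\vt)$ while their $\mmaxn$-values are $\vz+j\cdot\unit{e}$, $\vz+i\cdot\unit{e}$, and $\vz:=\mmaxn(\vs,\vt)$ respectively (using $y_e=z_e=0$ since $e\in D$). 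Pairwise monotonicity at $e$ applied at the base $\vz$ gives $f(\vz+i\cdot\unit{e})+f(\vz+j\cdot\unit{e})\ge 2f(\vz)$; summing the first two inductive inequalities with this pairwise monotonicity inequality yields $f(\vs)+f(\vt)+f(\vs^{(0)})+f(\vt^{(0)})\ge 2f(\vy)+2f(\vz)$. The plan is then to close the gap to $f(\vs)+f(\vt)\ge f(\vy)+f(\vz)$ by invoking pairwise monotonicity at $e$ at the bases $\vs^{(0)}$ and $\vt^{(0)}$ combined with a symmetrisation argument against the ``swapped'' pair $(\vs^{(j)},\vt^{(i)})$ obtained by replacing $s_e$ with $j$ and $t_e$ with $i$, which has the same conflict set, the same $\vy,\vz$, and therefore satisfies the analogous inequality.

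The main obstacle will be producing a combination that actually cancels the auxiliary terms $f(\vs^{(0)})+f(\vt^{(0)})$: the inductive hypothesis applied to $(\vs^{(0)},\vt^{(0)})$ only bounds these terms from \emph{below} by $f(\vy)+f(\vz)$, which is the wrong direction to eliminate them by straight subtraction. Closing the gap requires the symmetry trick, together with pairwise monotonicity at $e$ applied at both $\vs^{(0)}$ and $\vt^{(0)}$ to relate $f(\vs^{(0)}),f(\vt^{(0)})$ to $f(\vs)+f(\vs^{(j)})$ and $f(\vt)+f(\vt^{(i)})$; the desired inequality emerges after averaging the resulting bounds for $(\vs,\vt)$ and $(\vs^{(j)},\vt^{(i)})$. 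The generalisation from $k=2$ (the theorem of Ando, Fujishige, and Naito~\cite{ando96:bisubmodular}) to arbitrary $k\ge 2$ introduces no genuine new difficulty, because pairwise monotonicity is asserted uniformly over every pair $\{i,j\}\subseteq\{1,\ldots,k\}$ and the entire inductive step at the conflict coordinate $e$ only ever invokes the single pair $\{s_e,t_e\}$.
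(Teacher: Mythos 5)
Your necessity argument is exactly the paper's and is correct. The gap is in the converse. With the ingredients you list, your induction on the conflict set cannot produce the pointwise inequality $f(\vs)+f(\vt)\ge f(\vy)+f(\vz)$; it produces only the \emph{average} of this inequality with its counterpart for the swapped pair. Concretely: summing your two inductive inequalities for $(\vs^{(0)},\vt)$ and $(\vs,\vt^{(0)})$ with pairwise monotonicity at $\vz$ gives
\[
f(\vs)+f(\vt)\ \ge\ 2f(\vy)+2f(\vz)-f(\vs^{(0)})-f(\vt^{(0)}),
\]
and the only upper bounds you have on the auxiliary terms are the pairwise-monotonicity bounds $2f(\vs^{(0)})\le f(\vs)+f(\vs^{(j)})$ and $2f(\vt^{(0)})\le f(\vt)+f(\vt^{(i)})$, which reintroduce the swapped pair. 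Since $(\vs^{(j)},\vt^{(i)})$ has the same conflict set, it is not covered by the induction hypothesis, so you must run the identical derivation for it and add; after all cancellations you are left with exactly
\[
\bigl[f(\vs)+f(\vt)\bigr]+\bigl[f(\vs^{(j)})+f(\vt^{(i)})\bigr]\ \ge\ 2f(\vy)+2f(\vz),
\]
which is strictly weaker than what is needed: nothing rules out $f(\vs)+f(\vt)<f(\vy)+f(\vz)<f(\vs^{(j)})+f(\vt^{(i)})$. Worse, the induction hypothesis applied to $(\vs^{(0)},\vt^{(0)})$ bounds $f(\vs^{(0)})+f(\vt^{(0)})$ from \emph{below} by $f(\vy)+f(\vz)$, so among the inequalities you invoke there is no way to eliminate these terms except through the symmetric pairwise-monotonicity bounds, and those provably stop at the averaged statement.

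The paper escapes this by changing what gets peeled off. It first proves Lemma \ref{lem:char}, the special case where the two vectors agree everywhere outside the conflict set $I$, with target $f(\va)+f(\vb)\ge 2f(\vc)$. There the inductive step splits $I$ into $\{e\}$ and $I\setminus\{e\}$ and produces the auxiliary term $-2f(\vc+\vz)$, which is cancelled \emph{exactly} because the two intermediate vectors $\vc+\vbx$ and $\vc+\vby$ lie in a common orthant, so submodularity in that orthant yields $f(\vc+\vbx)+f(\vc+\vby)\ge f(\vc)+f(\vc+\vz)$. The general case is then handled not by induction on the conflict set at all, but by two applications of orthant-submodularity (to the pairs $(\vx,\restrict{\vy}{U\setminus I})$ and $(\vy,\mmaxn(\vx,\vy))$) followed by a single application of the lemma. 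To rescue your outline you would need to import this cancellation mechanism — produce two auxiliary vectors in a common orthant whose $\mmaxn$ is precisely the term you must cancel — rather than rely on symmetrisation against the swapped pair.
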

In order to prove Theorem \ref{thm:ando-properties}, we shall make use of the following lemma,
which allows us to generalize pairwise monotonicity to solutions that disagree on the placement of \emph{multiple} elements $e$.
\begin{lemma}
\label{lem:char}
Let $k \ge 2$ and suppose that $f : \{0,\ldots,k\}^{U} \to \posreals$ is
submodular in every orthant and pairwise monotone.    Let $\va$ and $\vb$ in
$\{0,\ldots,k\}^{U}$ satisfy $0 \neq a_{e} \neq b_{e} \neq 0$ for all $e \in I$
and $a_{e} = b_{e}$ for all $e \in U\setminus I$, and define $\vc=\restrict{\va}{U \setminus I} = \restrict{\vb}{U \setminus I}$.  Then, $f(\va) + f(\vb) \ge 2f(\vc)$.
\end{lemma}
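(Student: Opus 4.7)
The plan is to proceed by induction on $|I|$. The base case $|I|=1$ is immediate: writing $I=\{e\}$, we have $\va=\vc+a_e\unit{e}$ and $\vb=\vc+b_e\unit{e}$ with $a_e$ and $b_e$ distinct and nonzero, so pairwise monotonicity at $e$ with base $\vc$ gives exactly $f(\va)+f(\vb)\ge 2f(\vc)$.

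For the inductive step ($|I|\ge 2$), I fix some $e_0\in I$, set $\alpha=a_{e_0}$ and $\beta=b_{e_0}$, and introduce three auxiliary vectors
\[
\vs := \restrict{\va}{U\setminus\{e_0\}}, \qquad \vt := \vc+\beta\unit{e_0}, \qquad \vz := \vs+\beta\unit{e_0}.
\]
Note that $\va = \vs+\alpha\unit{e_0}$, that $\vz$ agrees with $\vb$ at $e_0$ and with $\va$ on $I\setminus\{e_0\}$, and that $\vs$ and $\vt$ have disjoint nonzero supports on $I$ while both agreeing with $\vc$ on $U\setminus I$.

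The argument then combines three bounds. First, pairwise monotonicity at $e_0$ with base $\vs$ (which has $0$ at $e_0$) and the distinct nonzero values $\alpha,\beta$ yields $f(\va)+f(\vz)\ge 2f(\vs)$. Second, the pair $(\vz,\vb)$ disagrees only on the smaller set $I\setminus\{e_0\}$, with common restriction $\vt$, so the inductive hypothesis gives $f(\vz)+f(\vb)\ge 2f(\vt)$. Third, since $\vs$ and $\vt$ lie in a common orthant, a direct calculation shows $\mminn(\vs,\vt)=\vc$ and $\mmaxn(\vs,\vt)=\vz$, so submodularity in that orthant gives $f(\vs)+f(\vt)\ge f(\vc)+f(\vz)$. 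Summing the first two inequalities and substituting twice the third, the $f(\vz)$ terms cancel and one obtains $f(\va)+f(\vb)\ge 2f(\vc)$.

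The main obstacle I anticipate is locating the right intermediate vector $\vz$. Pairwise monotonicity on its own only produces inequalities of the shape $f(\va)+f(\vb)+f(\vz_1)+f(\vz_2)\ge 4f(\vc)$ obtained from a single-coordinate swap, and these cannot isolate the sum over one valid pair from the dual pair. Submodularity in an orthant is exactly what makes the separation possible: applied to the complementary partial solutions $\vs$ and $\vt$ whose componentwise $\mmaxn$ equals $\vz$, it provides an \emph{upper} bound on $f(\vz)$ in terms of $f(\vs)+f(\vt)-f(\vc)$, which precisely compensates for the $f(\vz)$ contributions appearing in the lower bounds produced by pairwise monotonicity and the inductive hypothesis.
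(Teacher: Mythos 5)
Your proof is correct and takes essentially the same route as the paper: fix one element $e_0\in I$, form the intermediate solution $\vz$, apply pairwise monotonicity for the single-coordinate disagreement and the inductive hypothesis for the remaining $|I|-1$ coordinates, then close by applying submodularity in the common orthant of $\vs$ and $\vt$ (whose $\mminn$ is $\vc$ and $\mmaxn$ is $\vz$). The only difference is that you take $|I|=1$ as the base case via pairwise monotonicity directly, which is slightly cleaner than the paper's $|I|=0$ base case followed by an appeal to the ``induction hypothesis'' for the single-coordinate pair---an appeal that, when $|I|=1$, is strictly speaking circular and is really just pairwise monotonicity in disguise.
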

\begin{proof}
The proof is by induction on the size of $I$.  In the case that $|I| = 0$, the claim is trivial.  Suppose, then, that $|I| = p > 0$ and so $I$ contains at least 1 element $e$.  We can represent $\va$ and $\vb$ as $\va = \vc + \vx$, and $\vb = \vc + \vy$ where $\vx$ and $\vy$ are vectors in $\{0,\ldots,1\}^{U}$ satisfying $0 \neq x_{e} \neq y_{e} \neq 0$ for all $e \in I$, and $x_{e} = y_{e} = 0$ for all $e \in U \setminus I$.

Let $e \in I$ be some element on which $\va$ and $\vb$ disagree.  We define $\vbx = \restrict{\vx}{I \setminus \{e\}}$, $\vby = \restrict{\vy}{\{e\}}$, and $\vz = \vbx + \vby$.  Then, we have
\begin{equation}
\label{eq:char-1}
f(\va) + f(\vb)\ =\ f(\vc + \vx) + f(\vc + \vy)\ =\
[f(\vc + \vx) + f(\vc + \vz)] + [f(\vc + \vy) + f(\vc + \vz)]
- 2f(\vc + \vz).
\end{equation}
The solutions $\vc + \vx$ and $\vc + \vz$ disagree on precisely the single element $e$ in $I$ and are non-zero for this element.  Thus, by the induction hypothesis
\begin{equation}
\label{eq:char-2}
f(\vc + \vx) + f(\vc + \vz) \ge 2f(\vc + \vbx).
\end{equation}
Similarly, $\vc + \vy$ and $\vc + \vz$ disagree on precisely those $p-1$ elements in $I \setminus \{e\}$ and are non-zero for these elements.  Thus, by the induction hypothesis
\begin{equation}
\label{eq:char-3}
f(\vc + \vy) + f(\vc + \vz) \ge 2f(\vc + \vby).
\end{equation}
Combining \eqref{eq:char-1}, \eqref{eq:char-2}, and \eqref{eq:char-3} we obtain
\begin{equation}
\label{eq:char-4}
f(\va) + f(\vb) \ge 2f(\vc + \vbx) + 2f(\vc + \vby) - 2f(\vc + \vz).
\end{equation}
Now, we note that $\vc + \vbx$ and $\vc + \vby$ are both in the orthant $\vc + \vz$.  Thus, from submodularity in every orthant, 
\begin{equation}
\label{eq:char-5}
f(\vc + \vbx) + f(\vc + \vby) \ge
f(\mminn(\vc + \vbx, \vc + \vby)) +
f(\mmaxn(\vc + \vbx, \vc + \vby))
= 
f(\vc) +
f(\vc + \vz).
\end{equation}
Combining \eqref{eq:char-4} and \eqref{eq:char-5} we obtain
\begin{equation*}
f(\va) + f(\vb) \ge 2f(\vc) + 2f(\vc + \vz)  - 2f(\vc + \vz) = 2f(\vc). \qedhere
\end{equation*}
\end{proof}

We now return to the proof of the Theorem \ref{thm:ando-properties}.  
\begin{proof}[Proof of Theorem \ref{thm:ando-properties}]
We begin by showing that necessity of the two properties.  Suppose that $f$ is $k$-submodular.  Then, submodularity in every orthant follows directly from \eqref{eq:k-sub}.  For pairwise monotonicity, let $\vs$ satisfy $s_{e} = 0$.  Consider any pair of distinct values $i,j$ from $\{1,\ldots,k\}$, and let $\vs^{i} = \vs + i\cdot \unit{e}$ and $\vs^{j} = \vs + j\cdot \unit{e}$.  Then,
\begin{align*}
\marg{f}{i}{e}(\vs) + \marg{f}{j}{e}(\vs) &=
f(\vs^{i}) - f(\vs) + f(\vs^{j}) - f(\vs) 
\\ &\ge
f(\mminn(\vs^{i},\vs^{j})) + f(\mmaxn(\vs^{i},\vs^{j})) - 2f(\vs) 
\\ &= f(\vs) + f(\vs) - 2f(\vs).
\end{align*}

We now show that submodularity in every orthant and pairwise monotonicity imply  $k$-sub\-mod\-u\-lar\-ity.
Let $f$ be a function that is submodular in every orthant and pairwise monotone, and consider two arbitrary vectors $\vx$ and $\vy$ in $\{0,\ldots,k\}^{U}$.
Let $I$ be the set of all elements $e \in U$ for which $x_{e} \neq 0$, $y_{e} \neq 0$ and $x_{e} \neq y_{e}$.  We can write
\begin{equation}
\label{eq:char-6}
f(\vx) + f(\vy) =
f(\vx) + f(\restrict{\vy}{U \setminus I}) + f(\vy) + f(\mmaxn(\vx,\vy)) - f(\restrict{\vy}{U \setminus I}) - f(\mmaxn(\vx,\vy)).
\end{equation}
We note that $\vx$ and $\restrict{\vy}{U \setminus I}$ are in the same orthant, since they agree on all non-zero coordinates.  Thus,
\begin{align}
f(\vx) + f(\restrict{\vy}{U \setminus I}) 
&\ge
f(\mminn(\vx,\restrict{\vy}{U \setminus I})) +
f(\mmaxn(\vx,\restrict{\vy}{U \setminus I}))
\notag \\ 
&= 
f(\mminn(\vx,\vy)) +
f(\mmaxn(\vx,\vy) + \restrict{\vx}{I}),
\label{eq:char-7}
\end{align}
where in the final equation we have used the fact that for all $e \in I$, $x_{e} \neq 0$, $y_{e} \neq 0$ and $x_{e} \neq y_{e}$ and so $\mminn(x_{i},y_{i}) = \mmaxn(x_{i},y_{i}) = 0$.  Similarly, we have $\vy$ and $\mmaxn(\vx,\vy)$ in the same orthant, and so
\begin{align}
f(\vy) + f(\mmaxn(\vx,\vy)) &\ge 
f(\mminn(\vy, \mmaxn(\vx,\vy))) + 
f(\mmaxn(\vy, \mmaxn(\vx,\vy))) \notag \\
&= f(\restrict{\vy}{U \setminus I}) + f(\mmaxn(\vx,\vy) + \restrict{\vy}{I}).
\label{eq:char-8}
\end{align}
Combining \eqref{eq:char-6}, \eqref{eq:char-7}, and \eqref{eq:char-8}, we obtain
\begin{equation}
f(\vx) + f(\vy) \ge
f(\mminn(\vx,\vy)) +
f(\mmaxn(\vx,\vy) + \restrict{\vx}{I}) + f(\mmaxn(\vx,\vy) + \restrict{\vy}{I}) - f(\mmaxn(\vx,\vy)).
\label{eq:char-9}
\end{equation}
Finally, from Lemma \ref{lem:char} we have:
\begin{equation}
f(\mmaxn(\vx,\vy) + \restrict{\vx}{I}) + f(\mmaxn(\vx,\vy) + \restrict{\vy}{I})
\ge 2f(\mmaxn(\vx,\vy).
\label{eq:char-10}
\end{equation}
Combining \eqref{eq:char-9} and \eqref{eq:char-10} then gives
\begin{equation*}
f(\vx) + f(\vy) \ge f(\mminn(\vx,\vy)) + f(\mmaxn(\vx,\vy)). \qedhere
\end{equation*}
\end{proof}

We now provide an example of a natural class of $k$-set functions which are submodular in every orthant and $k$-wise monotone but not $k$-submodular.

\begin{example}
\label{ex:example-non-ksub}
Let $\fuv : \{0,\ldots,k\}^{\{u,v\}} \to \posreals$ be given by:
\[
\fuv(x_{u},x_{v}) = \begin{cases} 0, & x_{u} = x_{v} = 0 \\
\frac{1}{k}\sum_{i = 1}^{k}\chr{x_{u} < i} = \frac{k - x_{u}}{k}, & x_{u} \neq 0, x_{v} = 0 \\
\frac{1}{k}\sum_{i = 1}^{k}\chr{i < x_{v}} = \frac{x_{v}-1}{k}, & x_{u} = 0, x_{v} \neq 0 \\
\chr{x_{u} < x_{v}}, & \text{otherwise.}
\end{cases}
\]
The function $\fuv$ has the following intuitive interpretation: we begin with the valued constraint $\chr{x_{u} < x_{v}}$, where $x_{u}$ and $x_{v}$ range over $\{1,\ldots,k\}$.  This gives a function that is defined on all orthants.  We extend the function to partial assignments by setting $\fuv(0,0) = 0$, and otherwise assigning $\fuv(x_{u},0)$ and $\fuv(0,x_{v})$ the probability that $x_{u} > i$ and $i > x_{v}$, respectively, when $i$ is chosen uniformly at random from $\{1,\ldots,k\}$.

The function $\fuv$ arises in the following graph layout problem: we are given a directed graph $G=(V,E)$ and a number $k$, and we wish to partition $V$ into $k$ layers so that as many directed edges as possible travel from a lower- to a higher-numbered layer.  This problem is equivalent to maximizing the function $f(\vx) : \{0,\ldots,k\}^{V} \to \posreals$ given by $f(\vx) = \sum_{(u,v) \in E}\fuv(x_{u},x_{v})$ Although this function allows some vertices to remain unassigned, $k$-wise monotonicity implies that there is always a maximizer of $f$ that is an orthant.

We now show that $\fuv$ is submodular in every orthant and $k$-wise monotone.  Fix an orthant $(x_{u} = i, x_{v} = j)$, where $i,j \in \{1,\ldots,k\}$, and let $h$ be the submodular function induced by $\fuv$ and this orthant.  If $i \ge j$, we have
\begin{align*}
h_{u}(\emptyset) &= h(\{u\}) - h(\emptyset) = \frac{k-i}{k} & h_{v}(\emptyset) &= h(\{v\}) - h(\emptyset) = \frac{j-1}{k}\\
h_{u}(\{v\}) &= h(\{u,v\}) - h(\{v\}) = -\frac{j-1}{k} & h_{v}(\{u\}) &= h(\{u,v\}) - h(\emptyset) = -\frac{k - i}{k},
\end{align*}
while if $i < j$ (and hence $i \le j-1$), we have:
\begin{align*}
h_{u}(\emptyset) &= h(\{u\}) - h(\emptyset) = \frac{k-i}{k} = 1 - \frac{i}{k} & h_{v}(\emptyset) &= h(\{v\}) - h(\emptyset) = \frac{j-1}{k} \ge \frac{i}{k}\\
h_{u}(\{v\})&=h(\{u,v\}) - h(\{v\}) = 1-\frac{j-1}{k} \le 1 - \frac{i}{k} &
h_{v}(\{u\}) &= h(\{u,v\}) - h(\emptyset) = 1-\frac{k - i}{k} = \frac{i}{k}.
\end{align*}
In all cases, we observe that the marginals of $h$ are decreasing, and so $h$ is a submodular function.

In order to show that $\fuv$ is $k$-wise monotone, we note that $\marg{\fuv}{i}{e}(0,0)$ is non-negative for all values of $i$ and $e$, and so $\sum_{i = 1}^{k}\marg{\fuv}{i}{e}(0,0) \ge 0$ for all $e \in \{u,v\}$.  For the remaining marginals, suppose that $j \neq 0$.  Then, for we have
\begin{align*}
\sum_{i = 1}^{k}\marg{\fuv}{i}{u}(0,j) &= \sum_{i = 1}^{k}\left[\chr{i < j} - \frac{1}{k}\sum_{p = 1}^{j}\chr{p < j}\right]
= \sum_{i = 1}^{k}\chr{i < j} - \sum_{p = 1}^{k}\chr{p < j} = 0,
\\
\sum_{i = 1}^{k}\marg{\fuv}{i}{v}(j,0) &= \sum_{i = 1}^{k}\left[\chr{j < i} - \frac{1}{k}\sum_{p = 1}^{j}\chr{j < p}\right]
= \sum_{i = 1}^{k}\chr{j < i} - \sum_{p = 1}^{k}\chr{j < p} = 0.
\end{align*}
\end{example}

\section{The Naive Random Algorithm}
\label{sec:naive-rand-algor}
\newcommand{\symg}[2]{F_{#1,#2}} 

We now consider the performance of the naive random algorithm for
maximizing a $k$-submodular function $f : \{0,\ldots,k\}^{U}\to\posreals$.  Note
that pairwise monotonicity of $f$, guaranteed by
Theorem~\ref{thm:ando-properties}, implies that any partial solution $S \in
\{0,\ldots,k\}^U$ can be extended greedily to an orthant of $U$ without any loss
in the value of $f$, since for every element $e \not\in S$, we must have
$\marg{f}{i}{e}(S) \ge 0$ for some $i \in \{1,\ldots,k\}$.  Thus, we may assume
without loss of generality that $f$ takes its maximum value on some orthant
$\vo$.  We now consider the expected performance of a random algorithm that
simply selects an orthant of $U$ uniformly at random.

\begin{theorem} \label{thm:naive-random} 
Let $f : \{0,\ldots,k\}^{U}\to \posreals$ be a $k$-submodular function attaining its maximum value on orthant $\vo$, and let $\vx$ be an orthant of $U$ selected uniformly at random.  Then, $\expect[f(\vx)] \ge \frac{1}{4}f(\vo)$ if $k = 2$, and $\expect[f(\vx)] \ge \frac{1}{k}f(\vo)$ if $k \ge 3$.  
\end{theorem}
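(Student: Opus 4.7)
The plan is to reduce the problem to a statement about the submodular set function $h : 2^U \to \posreals$ induced by the optimal orthant $\vo$, namely $h(T) \defequals f(\restrict{\vo}{T})$, which by Lemma~\ref{lem:submodularity-orthant} is nonnegative submodular on $U$ with $h(U) = f(\vo)$. For the uniformly random orthant $\vx$, I would introduce the random subset $S \defequals \{e \in U : x_e = o_e\}$, in which each element lies independently with probability $1/k$, and observe that $\idn(\vx,\vo) = \restrict{\vo}{S}$. Applying \eqref{eq:k-sub-partition} to the pair $(\vx,\vo)$ yields
\[
f(\vx) \;\ge\; 2 h(S) - f(\vo),
\]
and combining this with the pointwise bound $f(\vx) \ge 0$ gives two valid lower bounds whose convex combinations I expect to exploit.

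For the case $k \ge 3$, I expect the $1/k$ bound to follow from combining the above with the pairwise-monotonicity identity $\sum_{i=1}^{k} f(\vs + i\unit{e}) \ge k f(\vs)$ valid for any partial solution $\vs$ with $s_e = 0$ (this follows from summing $\marg{f}{i}{e}(\vs) + \marg{f}{j}{e}(\vs) \ge 0$ over all pairs $\{i,j\} \subset \{1,\ldots,k\}$ and dividing by $k-1$). Applied coordinate by coordinate, this inequality turns $\sum_{\vx \in \{1,\ldots,k\}^U} f(\vx) = k^n \expect[f(\vx)]$ into a sum of $f$-values on partial solutions with fewer active coordinates; stopping the iteration after the "right" number of coordinate expansions and combining with the pointwise inequality from \eqref{eq:k-sub-partition} should deliver $\expect[f(\vx)] \ge f(\vo)/k$.

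For the case $k = 2$, I would instead appeal to a Feige-Mirrokni-Vondr\'ak–style estimate on the induced submodular $h$. Here the complement orthant $\bar\vx$ (with labels $1,2$ swapped) is also uniformly distributed and its corresponding set is $U \setminus S$. I would combine \eqref{eq:k-sub-partition} applied to $(\vx,\vo)$ and $(\bar\vx,\vo)$ with the submodularity inequality $h(S) + h(U \setminus S) \ge h(\emptyset) + h(U) \ge f(\vo)$ and the FMV lower bound $\expect[h(S)] \ge f(\vo)/4$ for a uniformly random subset $S$, to recover the claimed $1/4$ approximation.

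The main obstacle is that \eqref{eq:k-sub-partition} together with an off-the-shelf FMV inequality is too weak on its own: plugging $\expect[h(S)] \ge f(\vo)/4$ directly into $\expect[f(\vx)] \ge 2\expect[h(S)] - f(\vo)$ yields only $\expect[f(\vx)] \ge -f(\vo)/2$, a trivial bound. The argument must therefore carefully exploit nonnegativity of $f(\vx)$ on the event $\{2h(S) < f(\vo)\}$, a version of FMV tuned to the sampling probability $1/k$ and the particular submodular $h$, or an additional application of \eqref{eq:k-sub} to non-orthant partial solutions to obtain a sharper pointwise lower bound on $f(\vx)$. Getting the two constants $1/4$ (when $k=2$) and $1/k$ (when $k\ge 3$) to come out of the same framework, and matching the tight examples promised after the theorem, is the delicate part of the argument.
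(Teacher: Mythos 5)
There is a genuine gap, and you identify it yourself: the pointwise bound $f(\vx) \ge 2h(S_{\vx}) - f(\vo)$ obtained by applying \eqref{eq:k-sub-partition} to the pair $(\vx,\vo)$ loses a full $f(\vo)$, and no off-the-shelf Feige--Mirrokni--Vondr\'ak-type estimate on $\expect[h(S)]$ (whether at sampling rate $1/2$ or $1/k$) is strong enough to recover a positive constant after that loss. Your suggested fixes (exploiting nonnegativity on the bad event, a tuned FMV bound, an extra application of \eqref{eq:k-sub} to non-orthant partial solutions) are not carried out, and the first two in fact cannot work: even the optimal $p$-biased FMV bound $\expect[h(S)] \ge p(1-p)h(U)$ gives a negative right-hand side after subtracting $f(\vo)$, for every $k\ge 2$. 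The iteration $\sum_{i=1}^k f(\vs+i\unit{e}) \ge kf(\vs)$ for $k\ge 3$ is likewise a dead end as sketched: pushed all the way it yields only $\expect[f(\vx)]\ge f(\vec 0)\ge 0$, and you give no mechanism for ``stopping at the right number of coordinates,'' because \eqref{eq:k-sub-partition} cannot be applied once $\vs$ is no longer an orthant.

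The paper's argument avoids the $-f(\vo)$ loss entirely by never pairing $\vx$ with the fixed optimum $\vo$. For $k\ge 3$ it pairs $\vx$ with $\pi(\vx)$, where $\pi_e$ is a permutation of $\{1,\ldots,k\}$ whose unique fixed point is $o_e$; since $\pi$ is a bijection on each agreement-class $P(A)$, one gets $\sum_{\vx\in P(A)} f(\vx) = \tfrac12\sum_{\vx\in P(A)}[f(\vx)+f(\pi(\vx))] \ge |P(A)|\,h(A)$ with no leftover $f(\vo)$ term, after which Corollary~\ref{lem:LSVcor} relates $\sum_A h(A)$ to $h(U)=f(\vo)$. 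This is the crucial idea your plan is missing. For $k=2$ such a fixed-point-free-off-$o_e$ permutation does not exist, which is exactly why the paper switches to symmetrized averages $F_{i,j}$ over partial solutions $T(A,B)$ and establishes a two-step recursion ($F_{i,n-i}\ge F_{i-1,n-i-1}$ by pairing two orthants that disagree on a doubleton, and $F_{i-1,n-i-1}\ge\tfrac{i-1}{i+1}F_{i+1,n-i-1}$ by LSV), leading to $F_{i,n-i}\ge\tfrac{i(i-1)}{n(n-1)}F_{n,0}$; your complement-pairing/FMV route does not produce anything of this form. The shared ingredients (the induced submodular $h$, Corollary~\ref{lem:LSVcor}) are present in your outline, but without the permutation pairing for $k\ge 3$ and the partial-solution recursion for $k=2$ the argument does not close.
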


We present the analysis for the case in which $k \ge 3$ first, as it is simpler
and will aid in motivating some of the constructions used for the case $k = 2$.

\subsection{Analysis for $k \ge 3$}
\label{sec:case-k-ge}
Let $h : 2^U \to \posreals$ be the submodular function induced by $\vo$ and $f$.
For each $e \in U$ we consider a fixed permutation $\pi_e$ on the set
$\{1,\ldots,k\}$ with the property that $\pi_e(o_e) = o_e$ and $\pi_e(z) \neq z$
for all $z \in \{1,\ldots,k\} \setminus \{o_e\}$.\footnote{Such a permutation
can be obtained by taking, for example, $\pi_e(o_e) = o_e$, $\pi_e(o_e - 1) =
o_e + 1$, and $\pi(z) = z + 1 \mod k$ for all other $z \in \{1,\ldots,k\}$.} Then, we denote by $\pi(\vx)$ the vector $(\pi_{e}(x_{e}))_{e \in U}$.

Let $P(A)$ be the set of orthants of $U$ that agree with $\vo$ on exactly those coordinates $e \in A$.  The following lemma allows us to relate the sum of the values of all partitions in $P(A)$ to the value of $\vo$.
\begin{lemma} For each set $A \subseteq U$,
\[\sum_{\vx \in P(A)}f(\vx) \ge (k - 1)^{n - |A|}h(A).\]
\label{lem:symk2}
\end{lemma}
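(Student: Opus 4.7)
My strategy is to use the permutations $\pi_e$ to pair up the orthants in $P(A)$ and then apply the two-orthant form of $k$-submodularity (inequality~\eqref{eq:k-sub-partition}) to each pair. The first step is to verify that the coordinatewise map $\pi : \vx \mapsto (\pi_e(x_e))_{e \in U}$ is a bijection from $P(A)$ to itself. Since each $\pi_e$ fixes $o_e$ and has no other fixed points, it restricts to a permutation of $\{1,\ldots,k\}\setminus\{o_e\}$; hence $\pi$ preserves both of the conditions ``$x_e = o_e$ for $e \in A$'' and ``$x_e \neq o_e$ for $e \notin A$'' that define membership in $P(A)$.

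Next, I would compute $\idn(\vx, \pi(\vx))$ coordinatewise. For $e \in A$, both vectors equal $o_e$, so the coordinate equals $o_e$; for $e \notin A$, the defining property of $\pi_e$ gives $\pi_e(x_e) \neq x_e$, so the coordinate is $0$. Therefore $\idn(\vx,\pi(\vx)) = \restrict{\vo}{A}$, and $f(\idn(\vx,\pi(\vx))) = h(A)$ by the definition of the induced function $h$. Applying \eqref{eq:k-sub-partition} to the two orthants $\vx$ and $\pi(\vx)$ then yields
\[
f(\vx) + f(\pi(\vx)) \;\ge\; 2h(A).
\]

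Summing this inequality over $\vx \in P(A)$ and using that $\pi$ is a bijection on $P(A)$ (so $\sum_{\vx} f(\pi(\vx)) = \sum_{\vx} f(\vx)$) gives $2\sum_{\vx \in P(A)} f(\vx) \ge 2|P(A)|\,h(A) = 2(k-1)^{n-|A|}h(A)$, and dividing by $2$ delivers the lemma. I expect no substantial technical obstacle: once the pairing via $\pi$ is identified, the inequality falls out of \eqref{eq:k-sub-partition} almost immediately. The only subtle point, and the reason the assumption $k \ge 3$ appears in this subsection, is that the entire argument rests on the existence of a fixed-point-free permutation of $\{1,\ldots,k\}\setminus\{o_e\}$, which fails for $k=2$ (the set is a singleton), so the $k=2$ case genuinely requires separate treatment.
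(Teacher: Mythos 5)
Your proof is correct and follows essentially the same route as the paper's: pair each $\vx \in P(A)$ with $\pi(\vx)$, note $\pi$ is a bijection of $P(A)$, compute $\idn(\vx,\pi(\vx)) = \restrict{\vo}{A}$, apply \eqref{eq:k-sub-partition}, and sum over $P(A)$. Your closing remark about why the fixed-point-free permutation argument requires $k \ge 3$ is also accurate and matches the motivation for the separate $k=2$ analysis.
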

\begin{proof}
Consider the sum $\sum_{\vx \in P(A)}f(\pi(\vx))$.
Because $\pi_e(x_e) = o_e$ if and only if $x_e = o_e$ already, we have $\pi(\vx) \in P(A)$ if and only if $\vx \in P(A)$.  Then, because each $\pi_e$ is a bijection, we have
\[
\sum_{\vx \in P(A)}f(\vx) = \sum_{\vx \in P(A)}f(\pi(\vx)),
\]
and so,
\begin{equation}
\label{eq:sym-k1}
\sum_{\vx \in P(A)}f(\vx) =\frac{1}{2}\left[\sum_{\vx \in P(A)}\!\!\!f(\vx)\ +\! \sum_{\vx \in P(A)}\!\!\!f(\pi(\vx))\right] 
= \frac{1}{2}\sum_{\vx \in P(A)}\left[f(\vx) + f(\pi(\vx))\right].
\end{equation}
Now, we note that $\vx$ and $\pi(\vx)$ are both orthants.  Thus, from \eqref{eq:k-sub-partition} we have
\[
f(\vx) + f(\pi(\vx)) \ge 2\idn(\vx,\pi(\vx)).
\]
Consider an arbitrary coordinate $e \in U$.  If $e \in A$ we have $x_e = o_e$ and so $\pi_{e}(x_e) = x_e$ and hence $\idn(x_e, \pi_e(x_e)) = x_e$.  If $e \not\in A$, then we have $x_e \neq o_e$ and so $\pi_e(x_e) \neq x_e$ and hence $\idn(x_e,\pi_e(x_e)) = 0$.  Thus,
\[
2\idn(\vx,\pi(\vx)) = 2f(\restrict{\vo}{A}) = 2h(A).
\]
Combining this with \eqref{eq:sym-k1} we have,
\begin{equation*}
\sum_{\vx \in P(A)}f(\vx)\ =\ \frac{1}{2}\sum_{\vx \in P(A)}\left[f(\vx) + f(\pi(\vx))\right]
 \ge\ \sum_{\vx \in P(A)} \!\!\!h(A)\ =\ (k - 1)^{n - |A|} h(A),
\end{equation*}
since there are precisely $k - 1$ choices $i \neq o_e$ for $x_e$ for each of the $n - |A|$ coordinates $e \not\in A$.
\end{proof}

We now complete the proof of Theorem \ref{thm:naive-random} in the case $k \ge 3$.  
We formulate the expectation as
\[
\expect[f(\vx)] = 
\frac{1}{k^n}\sum_{i = 0}^n\sum_{A \in \binom{U}{i}}\sum_{\vx \in P(A)}f(\vx).
\]
Using Lemma \ref{lem:symk2} we obtain
\begin{equation}
\label{eq:ksub-random-1}
\sum_{i = 0}^n\sum_{A \in \binom{U}{i}}\sum_{\vx \in P(A)}f(\vx)
\ge \sum_{i = 0}^n\sum_{A \in \binom{U}{i}}(k - 1)^{n-i}h(A).
\end{equation}
Consider a fixed value $i \in \{0,\ldots,n\}$.  Each element $e \in U$ appears in exactly $\binom{n - 1}{i - 1}$ of the $\binom{n}{i}$ sets $A \in \binom{U}{i}$.  Because $h$ is submodular, Corollary \ref{lem:LSVcor} then implies that
\begin{equation}
\label{eq:ksub-random-2}
\sum_{A \in \binom{U}{i}}h(A) \ge \binom{n-1}{i - 1}h(U) = \binom{n-1}{i-1}f(\vo).
\end{equation}
Combining \eqref{eq:ksub-random-1} and \eqref{eq:ksub-random-2} with our formulation of $\expect[f(\vx)]$ we obtain:
\begin{align*}
\expect[f(\vx)] &\ge \frac{1}{k^n}\sum_{i = 0}^n\binom{n-1}{i - 1}(k - 1)^{n - i}f(\vo)\\
&= \frac{(k-1)^{n-1}}{k^n}\sum_{i = 0}^n\binom{n-1}{i-1}(k - 1)^{-(i-1)}f(\vo)\\
&= \frac{(k-1)^{n-1}}{k^n}\sum_{i = 0}^{n-1}\binom{n-1}{i}(k - 1)^{-i}f(\vo)\\
&= \frac{(k-1)^{n-1}}{k^n}\cdot \left(1 + \frac{1}{k - 1}\right)^{n-1} \cdot f(\vo)\\
&= \frac{(k-1)^{n-1}}{k^n} \cdot \frac{k^{n-1}}{(k - 1)^{n-1}} \cdot f(\vo)\\
&= \frac{1}{k} \cdot f(\vo). 
\end{align*}

\subsection{Analysis for $k=2$}
\label{sec:case-k=2}

Now we consider the case in which $f$ is a bisubmodular function, i.e. the case
of $k = 2$.  In the previous analysis of $k$-submodular functions for $k \ge 3$ we used a bijection $\pi_e$ on $\{1,\ldots,k\}$ with the property that $\pi_e(o_e) = o_e$ and $\pi_e(z) \neq z$ for all $z \neq o_e$.  However, when $k = 2$, no such bijection exists and we must adopt a different approach.  

Suppose again that $f$ attains its maximum on orthant $\vo \in \{1,2\}^U$.  For a value $v \in \{1,2\}$ we let $\bar{v} \defequals (v \mod 2) + 1$ (i.e. the other value in $\{1,2\}$).  Then, for any disjoint subsets $A$ and $B$ of $U$ we define the (partial) solution $T(A,B)$ by
\[
T(A,B)_i = \begin{cases} o_i, & i \in A \\
\bar{o_i}, & i \in B  \\
0, & \text{otherwise}
\end{cases}.
\]

It will simplify our analysis to work with with symmetrized values, which depend only on the sizes of the sets $A$ and $B$ chosen.  We define
\[
\symg{i}{j} = \binom{n}{i}^{-1}\binom{n - i}{j}^{-1}\sum_{A \in \binom{U}{i}}\sum_{B \in \binom{U \setminus A}{j}}[f(T(A,B))].
\]
Then, $\symg{i}{j}$ gives the average value of $f$ over all partial solutions on $i + j$ elements that agree with $\vo$ on exactly $i$ and disagree with it on exactly $j$ elements.  In particular, we have $\symg{n}{0} = f(\vo)$, and $\symg{i}{n - i} = \binom{n}{i}^{-1}\sum_{A \in \binom{U}{i}}f(T(A,U \setminus A))$.  Our next lemma relates these two values.

\begin{lemma} 
\label{lem:sym-2sub}
For all $i$ such that $0 \le i \le n$,
\begin{equation}
\label{eq:sym-2sub}
\symg{i}{n-i} \ge \frac{i(i-1)}{n(n - 1)}\symg{n}{0}.
\end{equation}
\end{lemma}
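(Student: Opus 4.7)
The plan is to prove the lemma via two successive applications of $k$-submodularity combined with submodularity of the induced set function $h(S) = f(\restrict{\vo}{S})$.

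First, I would fix two orthants $\vx, \vy$ with $|A(\vx)| = |A(\vy)| = i$ and apply \eqref{eq:k-sub-partition} to obtain $f(\vx) + f(\vy) \ge 2 f(\idn(\vx, \vy))$. By coordinate-wise analysis, $\idn(\vx, \vy)$ is the partial solution equal to $o_e$ on $A(\vx) \cap A(\vy)$ and $\bar o_e$ on $\bar A(\vx) \cap \bar A(\vy)$. Applying bisubmodularity once more, this time between $\idn(\vx, \vy)$ and $\vo$, and computing (as in the proof of Lemma~\ref{lem:submodularity-orthant}) that $\mminn(\idn(\vx,\vy),\vo) = \restrict{\vo}{A(\vx) \cap A(\vy)}$ and $\mmaxn(\idn(\vx,\vy),\vo) = \restrict{\vo}{A(\vx) \cup A(\vy)}$, I would chain the two inequalities to obtain
\[
f(\vx) + f(\vy) \ge 2 h(A(\vx) \cap A(\vy)) + 2 h(A(\vx) \cup A(\vy)) - 2 f(\vo).
\]

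Next, I would sum this inequality over all ordered pairs $(\vx, \vy)$ of orthants with $|A(\vx)| = |A(\vy)| = i$. The left hand side totals $2 \binom{n}{i} \sum_{|A| = i} f(T(A,\bar A)) = 2\binom{n}{i}^2 \symg{i}{n-i}$, while the right hand side becomes a weighted sum $\sum_{j=0}^{n} c_j \sum_{|T|=j} h(T) - 2 \binom{n}{i}^2 f(\vo)$, where the combinatorial weights $c_j$ are determined by counting pairs of $i$-subsets whose intersection (respectively union) has a specified size $j$.

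Then I would apply Corollary~\ref{lem:LSVcor} (and, if needed, the sharper Lemma~\ref{lem:LSV} to retain the $h(\emptyset)$ terms) to each inner sum $\sum_{|T|=j} h(T)$ to bound it below by a multiple of $f(\vo)$; specifically $\sum_{|T|=j} h(T) \ge \binom{n-1}{j-1} f(\vo)$. After substituting these bounds, the lower bound on $\binom{n}{i}^2 \symg{i}{n-i}$ is a rational combination of $f(\vo)$ and $h(\emptyset)$, and simplification of the binomial identities should yield exactly the coefficient $\binom{n-2}{i-2} f(\vo) = \binom{n}{i} \frac{i(i-1)}{n(n-1)} f(\vo)$, which is the desired bound.

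The main obstacle I anticipate is Step~4: a single use of Corollary~\ref{lem:LSVcor} applied to bisubmodularity between $\vo$ and an orthant yields only the weaker bound $\symg{i}{n-i} \ge \frac{2i-n}{n} f(\vo)$, which is tight only at $i = n, n-1$. Obtaining the stronger quadratic factor $\frac{i(i-1)}{n(n-1)}$ seems to require carefully exploiting the two-level application of $k$-submodularity (between the orthants \emph{and} between the resulting partial solution and $\vo$), together with the finer version of submodular coverage provided by Lemma~\ref{lem:LSV}. If this direct averaging does not close, I would fall back on an inductive argument on $i$ (descending from $i = n$), using the derived inequality as the inductive step and the $i=n,n-1$ base cases which follow from the single-application bound above.
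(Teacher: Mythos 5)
Your derived inequality $f(\vx)+f(\vy)\ge 2h(A\cap A')+2h(A\cup A')-2f(\vo)$ (writing $A,A'$ for the sets of coordinates on which $\vx,\vy$ agree with $\vo$, and $h$ for the function induced by $\vo$ and $f$) is correct, but the obstacle you flag at the end is fatal, not technical: the averaging step cannot produce the quadratic factor. Summing over all $\binom{n}{i}^2$ ordered pairs, each element of $U$ lies in $A\cap A'$ for exactly $\binom{n-1}{i-1}^2$ pairs and in $A\cup A'$ for exactly $\binom{n}{i}^2-\binom{n-1}{i}^2$ pairs, so each element is covered by exactly $2\binom{n}{i}\binom{n-1}{i-1}$ sets of the combined collection of intersections and unions. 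Corollary~\ref{lem:LSVcor} (with base set $\emptyset$) therefore yields
\[
F_{i,n-i}\ \ge\ \left(\frac{2\binom{n-1}{i-1}}{\binom{n}{i}}-1\right)f(\vo)\ =\ \frac{2i-n}{n}\,f(\vo)
\]
and nothing stronger; retaining the $h(\emptyset)$ terms from Lemma~\ref{lem:LSV} does not help, since $h(\emptyset)\ge 0$ is all one knows (and $h(\emptyset)=0$ in the tight example). The linear bound is strictly weaker than $\frac{i(i-1)}{n(n-1)}$ for all $0<i<n-1$ (the difference is $\frac{(n-i)(n-i-1)}{n(n-1)}$), and it is not enough for Theorem~\ref{thm:naive-random}: substituting $\max\bigl(0,\frac{2i-n}{n}\bigr)$ into the final binomial summation gives only an $O(1/\sqrt{n})$ fraction of $f(\vo)$ rather than $1/4$. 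Your fallback induction also lacks an inductive step: descending induction needs to relate $F_{i,n-i}$ to $F_{i+1,n-i-1}$, i.e., to partial solutions that still \emph{disagree} with $\vo$ on $n-i-1$ coordinates, whereas the right-hand side of your derived inequality involves only restrictions of $\vo$. Your second application of bisubmodularity (against $\vo$) discards exactly the information the induction needs.

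The paper's proof keeps that information by chaining two inequalities. First, applying \eqref{eq:k-sub-partition} to the two orthants $T(A-x,B+x)$ and $T(A-y,B+y)$, which differ in exactly the two coordinates $x,y\in A$, gives $F_{i,n-i}\ge F_{i-1,n-i-1}$; this is your Step~1 specialized to pairs at Hamming distance two, except that the resulting partial solution $T(A-x-y,B)$ is retained with its ``wrong'' part $B$ intact instead of being pushed into the orthant of $\vo$. Second --- and this is the ingredient your argument forecloses --- the paper works inside the \emph{mixed} orthant $T(A,B)$ with $|A|=i+1$: applying Corollary~\ref{lem:LSVcor} to the function induced by $T(A,B)$, with base set $S=B$ and the collection $\{U\setminus C : C\in\binom{A}{2}\}$, gives $\binom{i+1}{2}F_{i-1,n-i-1}\ge\binom{i}{2}F_{i+1,n-i-1}$. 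Combining the two yields $F_{i,n-i}\ge\frac{i-1}{i+1}F_{i+1,n-i-1}$, and telescoping this ratio from $i$ up to $n$ produces precisely $\frac{i(i-1)}{n(n-1)}$. The quadratic gain thus comes from applying the Lee--Sviridenko--Vondr\'ak bound with a nonempty base set inside orthants other than that of $\vo$, which your reduction to $h(A\cap A')$ and $h(A\cup A')$ cannot reach.
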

\begin{proof}
We prove 2 separate inequalities which together imply the lemma.  First, we shall show that for all $1 \le i \le n-1$,
\begin{equation}
  \label{eq:sym1}
\symg{i}{n-i} \ge \symg{i - 1}{n - i - 1}.
\end{equation}
We do this by showing that a related inequality holds for arbitrary sets of the appropriate size, and then average over all possible sets to obtain \eqref{eq:sym1}.  Fix $1 \le i \le n - 1$ and let $A$ be any subset of $U$ of size $i+1$. Set $B = U \setminus A$ and let $x$ and $y$ any two distinct elements in $A$.  Consider the solutions $T(A-x,B+x)$ and $T(A-y, B+y)$\footnote{Here, we employ the shorthand $A + x$ for $A \cup \{x\}$ and $A - x$ for $A \setminus \{x\}$.}.  They are both orthants and agree on all elements except $x$ and $y$.  Thus, from \eqref{eq:k-sub-partition}, the inequality
\begin{align*}
f(T(A - x,B+x)) + f(T(A - y,B+y)) &\ge 2\idn(T(A - x, B + x), T(A - y, B+y)) \\ &= 2f(T(A-x-y,B))
\end{align*}
holds for any such choice of $A$, $x$, and $y$, where $|A| = i + 1$ and $|B| = |U \setminus A| = n - i - 1$.  Averaging the resulting inequalities over all possible choices for $A$, $B = U \setminus A$, $x$, and $y$ and dividing both sides by 2 then gives \eqref{eq:sym1}.

Next, we show that for any $1 \le i \le n-1$,
\begin{equation}
\label{eq:sym2}
\symg{i-1}{n-i-1} \ge \frac{i-1}{i+1}\symg{i+1}{n-i-1}.
\end{equation}
Again fix $i \ge 1$, let $A$ be any subset of $U$ of size $i+1$ and set $B = U \setminus A$.  Let $h$ be the submodular function induced by the orthant $T(A,B)$ and $f$.   Note then, that we can express $h$ as $h(X) = T(A \cap X, B \cap X)$).  We consider the sum:
\begin{equation*}
\sum_{C \in \binom{A}{2}}\left[f(T(A \setminus C,B)) - T(\emptyset,B)\right]
 = \sum_{C \in \binom{A}{2}}\left[h(U \setminus C) - h(B)\right]
\end{equation*}
Each element of $A$ appears in exactly $\binom{|A|-1}{2} = \binom{i}{2}$ of the sets $U \setminus C$ above (one for each way to choose a two element set $C$ from the remaining $|A| - 1$ elements).  Applying Corollary \ref{lem:LSVcor} we then obtain
\begin{equation*}
\sum_{C \in \binom{A}{2}}h(U \setminus C)
\ge \binom{i}{2}h(U)
= \binom{i}{2}T(A,B).
\end{equation*}
Altogether, we obtain the inequality
\[
\sum_{C \in \binom{A}{2}}f(T \setminus C, B) \ge \binom{i}{2}T(A,B),
\]
valid for any choice of $A$, with $|A| = i + 1$, and $|B| = |U \setminus A| = n - i - 1$.  Averaging the resulting inequalities over all possible choices for $A$, we obtain
\[
\binom{i+1}{2}\symg{i - 1}{n - i - 1} \ge \binom{i}{2}\symg{i+1}{n-i-1},
\]
which is equivalent to \eqref{eq:sym2}.

Combining \eqref{eq:sym1} and \eqref{eq:sym2} then gives the symmetrized inequality
\begin{equation}
\label{eq:sym3}
\symg{i}{n - i} \ge \frac{i - 1}{i + 1}\symg{i + 1}{n - i - 1}.
\end{equation}
The desired inequality \eqref{eq:sym-2sub} then follows from reverse induction on $i$.  If $i = n$, then \eqref{eq:sym-2sub} is trivial.  For the inductive step, we suppose that $1 \le i \le n-1$.  Then, applying \eqref{eq:sym3} followed by the induction hypothesis gives
\begin{equation*}
\symg{i}{n - i} \ge \frac{i-1}{i+1}\symg{i+1}{n-i-1} 
 \ge \frac{i-1}{i+1}\cdot \frac{(i + 1)i}{n(n - 1)}\symg{n}{0} 
= \frac{i(i - 1)}{n(n - 1)}\symg{n}{0}.
\end{equation*}
If $i = 0$, we cannot apply \eqref{eq:sym3}.  In this case, however,~\eqref{eq:sym-2sub} follows directly from non-negativity of $f$.
\end{proof}

We now complete the proof of Theorem \ref{thm:naive-random} in the case that $k = 2$.
We can formulate the expectation in terms of our symmetric notation as
\begin{equation*}
\expect[f(\vx)] = 2^{-n}\sum_{i = 0}^n\sum_{A \in \binom{U}{i}}T(A, U \setminus A) 
= 2^{-n}\sum_{i=0}^n \binom{n}{i}\symg{i}{n-i}.
\end{equation*}
Then, we have
\begin{align*}
2^{-n}\sum_{i=0}^n \binom{n}{i}\symg{i}{n-i} &\ge 2^{-n}\sum_{i=2}^n \binom{n}{i}\symg{i}{n-i} 
\\ &\ge 2^{-n}\sum_{i=2}^n \binom{n}{i}\frac{i(i-1)}{n(n - 1)}\symg{n}{0} 
\\ &= 2^{-n}\sum_{i=2}^n \binom{n-2}{i-2}\symg{n}{0} 
\\ &= 2^{-n}\sum_{i = 0}^{n-2} \binom{n-2}{i}\symg{n}{0} 
\\ &= 2^{-n}\cdot 2^{n-2}\symg{n}{0}
\\ &= \frac{1}{4}f(\vo), 
\end{align*}
where the first inequality follows from non-negativity of $f$ (and hence of $F$) and the second inequality follows from Lemma \ref{lem:sym-2sub}.

\begin{example}
As a tight example for $k=2$, we consider the function $\fuv$ defined as in Example \ref{ex:example-non-ksub} for the special case in which $k = 2$.  Then, the resulting function is submodular in every orthant and $2$-wise monotone and hence must be bisubmodular.  Moreover, the probability that a random orthant will set $x_u = 1$, and $x_v = 2$ is $\frac{1}{4}$, and the function has value 0 for all other orthants.  Thus, 
$\expect[\fuv(\vx)] = \frac{1}{4}$, whereas the maximum value is 1.  

This example is easily extended to ground sets $U = \{u\} \cup V$ of arbitrary size, by setting $f(\vx) = \sum_{v \in V}\fuv(x_{u},x_{v})$.  This function is also bisubmodular as it is a positive combination of bisubmodular functions.  Moreover, the assignment setting $x_{u} = 1$ and $x_{v} = 2$ for all $v \in V$ has value $|V|$, but by linearity of expectation a uniform random assignment has expected value only $\frac{1}{4}|V|$.
\end{example}

\begin{example}
As a tight example for $k\geq 3$, we consider the single-argument $k$-submodular function
$f^{(e)} : \{0,\ldots,k\}^{\{e\}}$ given by $f(x_{e}) = \chr{x_{e} = 1}$.  It is easy to verify that this function is indeed $k$-submodular.  Moreover, a uniform random assignment sets $x_{e} = 1$ 
with probability only $\frac{1}{k}$, and so $\expect[f^{(e)}(x_{e})] =
\frac{1}{k}$.  Similar to the previous example, we can generalize to an arbitrary ground set $U$ by setting $f(\vx) = \sum_{e \in U}f^{(e)}(x_{e})$.  We note also that the value 1 in the definition of each $f^{(e)}$ can be replaced by any value $p \in \{1,\ldots,k\}$.
\end{example}

\section{A Deterministic Greedy Algorithm}
\label{sec:greedy-det}

In this section we consider a deterministic greedy algorithm for maximizing a
$k$-set function $f : \{0,\ldots,k\}^{U}\to\posreals$, that is
submodular in every orthant and $r$-wise monotone for some $1\leq r\leq k$, where $k \ge 2$.  As a special case, we obtain an approximation algorithm for $k$-submodular functions.

The algorithm begins with the initial
solution $\vs = \vec{0}$ and considers elements of the ground set
$U$ in some arbitrary order, permanently setting $s_{e} = i$ for each element $e$, based on the increase that this gives in $f$.  Specifically, the
algorithm sets $s_{e}$ to the value $i$ that yields the largest marginal
increase $\marg{f}{i}{e}(S)$ in $f$ with respect to the current solution $\vs$. If
there is more than one option we set $s_{e}$ the smallest such $i$ giving the maximal increase.

\begin{algorithm} Deterministic Greedy \hfill
\renewcommand{\algorithmicforall}{\textbf{for each}}
\newcommand{\KwTo}{\textbf{to} }
\begin{algorithmic}
\State $\vs \gets \vec{0}$ 
\ForAll{$e \in U$}
  \For{$i = 1$ \KwTo $k$}
  \State $y_i \gets \marg{f}{i}{e}(\vs)$
  \EndFor
  \State $y = \mmax(y_{1},\ldots,y_k)$
  \State Let $q$ be the smallest value from $\{1,\ldots,k\}$ so that $y_i=y$.
  \State $s_e \gets q$
\EndFor
\State \Return $\vs$\;
\end{algorithmic}
\label{alg:alg2}
\end{algorithm}


\begin{theorem}
  \label{thm:greedy-det}
Let $\vs$ be the solution produced by the deterministic greedy algorithm on some
instance $f:\{0,\ldots,k\}^U\rightarrow\posreals$ that is submodular in every
orthant and $r$-wise monotone for some $1\leq r\leq k$, and let $\opt$ be the optimal solution for this instance.  Then,
\[
(1+r)f(\vs)\ \ge\ f(\opt).
\]
\end{theorem}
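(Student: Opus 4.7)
The plan is a standard hybrid/exchange argument. First, $r$-wise monotonicity lets me assume without loss of generality that $\opt$ is an orthant: any partial solution can be completed without decreasing $f$, because the sum of any $r$ marginals $\marg{f}{i}{e}$ at an unassigned coordinate $e$ is non-negative, so at least one of the $k$ marginals at $e$ must be non-negative. Let $e_{1},\ldots,e_{n}$ be the order in which the greedy algorithm processes the elements, let $\si{j}$ denote the partial solution after step $j$ (so $\si{0} = \vec{0}$ and $\si{n} = \vs$), and define hybrid solutions $\oi{j}$ that agree with $\si{j}$ on $\{e_{1},\ldots,e_{j}\}$ and with $\opt$ elsewhere, so that $\oi{0} = \opt$ and $\oi{n} = \vs$. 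The goal is the per-step inequality $f(\oi{j-1}) - f(\oi{j}) \le r\bigl(f(\si{j}) - f(\si{j-1})\bigr)$; summing over $j$ telescopes it into $f(\opt) - f(\vs) \le r\bigl(f(\vs) - f(\vec{0})\bigr) \le r\,f(\vs)$, which is the theorem.

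To prove the per-step bound, fix $j$ and set $e = e_{j}$, $q = s_{e}$, $o = o_{e}$, and $\vt = \restrict{\oi{j-1}}{U \setminus \{e\}}$. If $q = o$ then $\oi{j-1} = \oi{j}$ and nothing is to show; otherwise $q \neq o$ and both are non-zero, so $f(\oi{j-1}) - f(\oi{j}) = \marg{f}{o}{e}(\vt) - \marg{f}{q}{e}(\vt)$. The key step is to use $r$-wise monotonicity at $\vt$ (valid because $t_{e} = 0$) to absorb the negative term: choose $I \subseteq \{1,\ldots,k\}$ of size $r$ with $q \in I$, and additionally $o \in I$ whenever $r \ge 2$. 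Since $\sum_{i \in I} \marg{f}{i}{e}(\vt) \ge 0$, we obtain $-\marg{f}{q}{e}(\vt) \le \sum_{i \in I \setminus \{q\}} \marg{f}{i}{e}(\vt)$, so that $f(\oi{j-1}) - f(\oi{j})$ is bounded by a sum of at most $r$ marginals of the form $\marg{f}{i}{e}(\vt)$ (with $\marg{f}{o}{e}(\vt)$ potentially counted twice when $r \ge 2$).

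The final step is to bound each such $\marg{f}{i}{e}(\vt)$ by the greedy gain $f(\si{j}) - f(\si{j-1})$. For any fixed $i$, consider the orthant $\vec{w}$ defined by $w_{e_{m}} = s_{e_{m}}$ for $m \le j-1$, $w_{e} = i$, and $w_{e_{m}} = o_{e_{m}}$ for $m > j$; then $\si{j-1} = \restrict{\vec{w}}{\{e_{1},\ldots,e_{j-1}\}}$ and $\vt = \restrict{\vec{w}}{U \setminus \{e\}}$. Lemma~\ref{lem:submodularity-orthant} applied to $\vec{w}$ produces a submodular set function for which $\{e_{1},\ldots,e_{j-1}\} \subseteq U \setminus \{e\}$, so the diminishing returns property yields $\marg{f}{i}{e}(\vt) \le \marg{f}{i}{e}(\si{j-1})$; the greedy choice of $q$ then gives $\marg{f}{i}{e}(\si{j-1}) \le \marg{f}{q}{e}(\si{j-1}) = f(\si{j}) - f(\si{j-1})$. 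Since each of the at most $r$ marginals in the bound is separately at most $f(\si{j}) - f(\si{j-1})$, the per-step inequality follows.

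The main obstacle is the bookkeeping in the central $r$-wise monotonicity step: the set $I$ must be chosen to contain both $q$ and $o$ so that the negative contribution of $\marg{f}{q}{e}(\vt)$ is absorbed while still allowing $\marg{f}{o}{e}(\vt)$ to remain comparable to the greedy gain; the apparent double count of $\marg{f}{o}{e}(\vt)$ among the resulting $r$ terms is harmless because each individual marginal at $\vt$ is separately controlled via the orthant submodularity argument. For $r = 1$ the situation degenerates pleasantly: taking $I = \{q\}$ yields $-\marg{f}{q}{e}(\vt) \le 0$ directly from monotonicity, and only the single term $\marg{f}{o}{e}(\vt)$ survives, giving the familiar $2f(\vs) \ge f(\opt)$ bound for monotone submodular functions.
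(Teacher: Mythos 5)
Your proof is correct and follows essentially the same route as the paper's: the same hybrid solutions $\oi{j}$, the same per-step inequality established by applying $r$-wise monotonicity at the hybrid point $\vt$ to absorb $-\marg{f}{q}{e}(\vt)$, and the same use of submodularity in the orthant containing both $\vt+i\cdot\unit{e}$ and $\si{j}+i\cdot\unit{e}$ to bound each marginal at $\vt$ by the corresponding greedy marginal (and hence by the greedy gain $y_q$). The only differences are cosmetic: you make explicit the reduction to $\opt$ being an orthant, and you insist $o\in I$, which is not needed since every term is separately bounded by $y_q$.
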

\begin{proof}
Our analysis considers 2 sequences of $n$ solutions.  First let, $\si{j}$ be the algorithm's solution after $j$ elements of $U$ have been considered, and let $\ui{j}$ be the set of elements that have been considered.  Let $\oi{j} = \restrict{\opt}{U \setminus \ui{j}} + \si{j}$ be a partial solution
that agrees with $\si{j}$ on the placement of the elements considered by the greedy algorithm in its first $j$ phases and with $\opt$ on the placement of all other elements.  Note that in particular we have $\oi{0} = \opt$ and $\oi{n} = \vs$.  
Our analysis of the greedy algorithm will bound the loss in $f(\oi{j})$ incurred
at the each stage by the improvement in $\si{j}$ made by the
algorithm. In Lemma~\ref{lem:greedy-det}, we show that for every $0\le j\le n$,
$f(\oi{j}) - f(\oi{j+1})\ \le\ r[f(\si{j+1}) - f(\si{j})]$.

Summing this inequality from $j = 0$ to $n-1$, we obtain
\begin{equation*}
\sum_{j = 0}^{n-1}\left[f(\oi{j}) - f(\oi{j+1})\right]\ \le\ r\sum_{j = 0}^{n-1}\left[f(\si{j+1}) - f(\si{j})\right].
\end{equation*}
Telescoping the summations on each side, we then have
\[
f(\oi{0}) - f(\oi{n})\ \le\
r\left[f(\si{n}) - f(\si{0})\right].
\]
The theorem then follows immediately from the facts $\oi{0} = \opt$, $\oi{n} = \si{n} = \vs$, and $\si{0} \ge 0$.
\end{proof}

It remains to show the following inequality.

\begin{lemma}
  \label{lem:greedy-det}
For $0 \le j \le n-1$,
\[f(\oi{j}) - f(\oi{j+1})\ \le\
r\left[f(\si{j+1}) - f(\si{j})\right].\]
\end{lemma}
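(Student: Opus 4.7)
The plan is to analyze a single step of the greedy algorithm. Let $e$ denote the $(j+1)$-th element processed, let $q$ be the value chosen by greedy (so $\si{j+1} = \si{j} + q\cdot\unit{e}$), and set $p = o_{e}$ to be the optimum's assignment to $e$. Define $\vv = \si{j} + \restrict{\opt}{U\setminus\ui{j+1}}$, so that $\vv$ has $v_{e}=0$ and satisfies $\oi{j} = \vv + p\cdot\unit{e}$ and $\oi{j+1} = \vv + q\cdot\unit{e}$. Both sides of the desired inequality then simplify:
\[
f(\oi{j}) - f(\oi{j+1})\ =\ \marg{f}{p}{e}(\vv) - \marg{f}{q}{e}(\vv), \qquad f(\si{j+1}) - f(\si{j})\ =\ \marg{f}{q}{e}(\si{j}),
\]
so it suffices to show $\marg{f}{p}{e}(\vv) - \marg{f}{q}{e}(\vv) \le r\,\marg{f}{q}{e}(\si{j})$.

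I would then assemble three ingredients. First, for each value $i \in \{1,\ldots,k\}$, both $\si{j}$ and $\vv$ lie in the same orthant of $U$, namely the one that agrees with $\si{j}$ on $\ui{j}$, with $\opt$ on $U\setminus\ui{j+1}$, and has value $i$ at $e$; since $\ui{j} \subseteq U\setminus\{e\}$, the supports satisfy $\mathrm{supp}(\si{j}) \subseteq \mathrm{supp}(\vv)$ inside this orthant. Lemma~\ref{lem:submodularity-orthant} combined with the diminishing-marginals characterization of ordinary submodularity then gives $\marg{f}{i}{e}(\vv) \le \marg{f}{i}{e}(\si{j})$. Second, by the greedy selection rule, $\marg{f}{i}{e}(\si{j}) \le \marg{f}{q}{e}(\si{j})$ for every $i$. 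Chaining these yields $\marg{f}{i}{e}(\vv) \le \marg{f}{q}{e}(\si{j})$ for every $i \in \{1,\ldots,k\}$, which in particular bounds $\marg{f}{p}{e}(\vv)$.

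Third, I would apply $r$-wise monotonicity at $\vv$ (valid since $v_{e}=0$) with any $r$-subset $I \subseteq \{1,\ldots,k\}$ containing $q$:
\[
0\ \le\ \sum_{i \in I}\marg{f}{i}{e}(\vv)\ =\ \marg{f}{q}{e}(\vv) + \sum_{i \in I\setminus\{q\}}\marg{f}{i}{e}(\vv).
\]
Rearranging and applying the bound from the first two ingredients to each of the $r-1$ terms in the final sum gives $-\marg{f}{q}{e}(\vv) \le (r-1)\,\marg{f}{q}{e}(\si{j})$. Adding the two bounds $\marg{f}{p}{e}(\vv) \le \marg{f}{q}{e}(\si{j})$ and $-\marg{f}{q}{e}(\vv) \le (r-1)\,\marg{f}{q}{e}(\si{j})$ then yields the lemma. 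The degenerate case $p = q$ reduces to $0 \le r\,\marg{f}{q}{e}(\si{j})$, which holds because $\marg{f}{q}{e}(\si{j})$ dominates the average of the $r$ marginals in any $r$-subset containing $q$, and this average is non-negative by $r$-wise monotonicity at $\si{j}$.

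The main obstacle is the first ingredient: one must verify carefully that $\vv$ and $\si{j}$ really do sit in a common orthant for each choice of $i$ at $e$, so that Lemma~\ref{lem:submodularity-orthant} can convert $k$-set-function marginals at two different partial solutions into the standard diminishing-marginals inequality for an ordinary submodular function. Once this orthant is identified, the remaining steps are pure bookkeeping — a single application of $r$-wise monotonicity at $\vv$ together with the greedy rule closes the gap between the optimal and greedy marginals at $e$.
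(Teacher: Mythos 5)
Your proof is correct and follows essentially the same approach as the paper's: your $\vv$ is the paper's $\vt$, your three ingredients (diminishing marginals in a common orthant giving $\marg{f}{i}{e}(\vv) \le \marg{f}{i}{e}(\si{j})$, the greedy rule giving $\marg{f}{i}{e}(\si{j}) \le \marg{f}{q}{e}(\si{j})$, and $r$-wise monotonicity applied at $\vv$ over an $r$-subset containing $q$) are exactly the paper's inequalities $a_i \le y_i$, $y_i \le y_q$, and $\sum_{\ell \in I} a_\ell \ge 0$. The separate treatment of the degenerate case $p=q$ is unnecessary, since adding your two bounds already yields $0 \le r\,\marg{f}{q}{e}(\si{j})$ in that case.
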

\begin{proof}
Let $e$ be the element considered in the $(j+1)$th phase of the algorithm.


We define the solution $\vt = \restrict{\vo}{U \setminus \ui{j+1}} + \si{j}$, and
let $a_i = \marg{f}{i}{e}(\vt)$ for $1 \le i \le k$.  Then, we note that for any value $i$, $\vt + i \cdot \unit{e}$ and $\si{j} + i \cdot \unit{e}$ are in the same orthant.  For some value $i$, let $h$ be the submodular function induced by this orthant and $f$.  Then $h$ must be submodular, and so
\[
y_{i} = \marg{f}{i}{e}(\si{j}) = h_{e}(\ui{j}) \ge h_{e}(U \setminus \{e\}) = \marg{f}{i}{e}(\vt) = a_{i}.
\]
Suppose that in the optimal solution we have $o_{e} = p$ but the greedy algorithm sets $s_{e} \gets q$.  Then, we observe that $f(\oi{j}) = f(\vt) + \marg{f}{p}{e}(\vt)$ and $f(\oi{j+1}) = f(\vt) + \marg{f}{q}{e}(\vt)$, and so
\begin{equation*}
f(\oi{j}) - f(\oi{j+1}) = \marg{f}{p}{e}(\vt) - \marg{f}{q}{e}(\vt) = a_p - a_q.
\end{equation*}
Similarly,
\begin{equation*}
f(\si{j+1}) - f(\si{j}) = \marg{f}{p}{e}(\si{j}) = y_j.
\end{equation*}
By $r$-wise monotonicity, for any $I\subseteq\{1,\ldots,k\}$ with $|I|=r$ we
have $\sum_{\ell\in I} a_\ell\geq 0$ and thus $-a_q \leq \sum_{\ell\in I \setminus \{q\}}
a_\ell$. Therefore, $a_p-a_q\leq a_p+\sum_{\ell\in I \setminus \{q\}} a_\ell\leq r\cdot y_q$ as $a_i\leq y_{i}$ for every $1\leq i\leq k$ and $y_q=\mmax(y_1,\ldots,y_k)$.
%
%
\end{proof}

Combining Theorem~\ref{thm:ando-properties} and Theorem~\ref{thm:greedy-det}
gives us the following.

\begin{corollary}
Let $\vs$ be the solution produced by the deterministic greedy algorithm for some $k$-submodular function $f:\{0,\ldots,k\}^U\rightarrow\posreals$, and let $\opt$ be an optimal solution for this instance.  Then,
\[
\frac{1}{3}f(\vs)\ \ge\ f(\opt).
\]
\end{corollary}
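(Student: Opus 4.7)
The plan is to obtain this as a one-line corollary of the two preceding theorems, with essentially no new calculation required. First I would invoke Theorem~\ref{thm:ando-properties}, which (for $k \ge 2$) identifies $k$-submodular functions with the class of $k$-set functions that are both submodular in every orthant and pairwise monotone. In the language of Definition~\ref{def:properties}, pairwise monotonicity is exactly $r$-wise monotonicity with $r = 2$, so the hypothesis of Theorem~\ref{thm:greedy-det} is satisfied with $r = 2$.

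Next I would apply Theorem~\ref{thm:greedy-det} to $f$ with $r = 2$. That theorem asserts $(1+r)f(\vs) \ge f(\opt)$ for the output $\vs$ of the deterministic greedy algorithm on any function that is submodular in every orthant and $r$-wise monotone. Substituting $r = 2$ yields $3 f(\vs) \ge f(\opt)$, which (read naturally as $f(\vs) \ge \tfrac{1}{3} f(\opt)$) is the stated bound.

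There is no real obstacle: the two preceding theorems have already done all the work, so the proof is essentially a bookkeeping remark. The one point worth emphasizing is that using a larger value of $r$ would not help here. Pairwise monotonicity already implies $r$-wise monotonicity for every $r \ge 2$ (sum the $\binom{r}{2}$ pairwise inequalities $a_i + a_j \ge 0$ over any $r$-subset $I$ to get $(r-1)\sum_{i \in I} a_i \ge 0$), but the approximation factor $1/(1+r)$ in Theorem~\ref{thm:greedy-det} is monotonically worse as $r$ grows. Hence $r = 2$ is optimal within this framework, giving precisely the $1/3$ guarantee. Any improvement beyond $1/3$ must come from a genuinely different algorithm, which is the content of the randomized approach in Section~\ref{sec:greedy-algorithm}.
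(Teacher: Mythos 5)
Your proposal is correct and is exactly the paper's argument: the corollary follows by combining Theorem~\ref{thm:ando-properties} (which characterizes $k$-submodular functions as submodular in every orthant and pairwise monotone, i.e.\ $r=2$) with Theorem~\ref{thm:greedy-det} applied at $r=2$. You also rightly read the stated inequality as $f(\vs) \ge \tfrac{1}{3}f(\opt)$; the displayed form in the corollary has the factor on the wrong side, which is a typo in the paper.
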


The following is a tight example for Theorem \ref{thm:greedy-det}.

\begin{example}
Let $0 \le r \le k$ and consider the function $\fuv : \{0,\ldots,k\}^{\{u,v\}} \to \posreals$ given by $\fuv(x_{u},x_{v}) = \frac{1}{r+1}\chr{x_{u} \neq 0} + \frac{r}{r+1}\chr{x_{u} \neq 1 \land x_{v} = 2}$.  We shall first show that $\fuv$ is submodular in every orthant and $r$-wise monotone.  

Fix an orthant $(x_{u}=i, x_{v}=j)$ with $j \neq 2$, and let $h$ be the function induced by $\fuv$ and this orthant.  Then, the marginals of $h$ are given by:
\begin{align*}
h_{u}(\emptyset) &= h(\{u\}) - h(\emptyset) = \tfrac{1}{r+1} & h_{v}(\emptyset) &= h(\{v\}) - h(\emptyset) = 0
\\
h_{u}(\{v\}) &= h(\{u,v\}) - h(\{v\}) = \tfrac{1}{r+1} & h_{v}(\{u\}) &= h(\{u,v\}) - h(\{u\}) = 0.
\end{align*}
Now, fix an orthant $(x_{u}=i,x_{v}=2)$, and let $h$ be the function induced by $\fuv$ and this orthant.  We have
\begin{align*}
h_{u}(\emptyset) &= h(\{u\}) - h(\emptyset) = \tfrac{1}{r+1} & h_{v}(\emptyset) &= h(\{v\}) - h(\emptyset) = \tfrac{r}{r+1}
\\
h_{u}(\{v\}) &= h(\{u,v\}) - h(\{v\}) = \tfrac{1}{r+1} - \tfrac{r}{r+1}\chr{i=1} 
& h_{v}(\{u\}) &= h(\{u,v\}) - h(\{u\})  = \tfrac{r}{r+1}\chr{i \neq 1}.
\end{align*}
In all cases, the marginals of $h$ are decreasing, and so $\fuv$ is submodular in every orthant.  We now show that $\fuv$ is $r$-wise monotone.   
The marginals of $\fuv$ are non-negative, except the one obtained by setting
$x_{u}$ from $0$ to $1$ in the case that $x_{v} = 2$.  Thus, the only
non-trivial case is that in which $x_{v} = 2$, and $I$ is a set of $r$ distinct values with $1 \in I$.  In this case,
\begin{align*}
\sum_{i \in I}[\marg{\fuv}{i}{u}(0,2)] &= \fuv(1,2) - \fuv(0,2) + \sum_{i \in I\setminus \{1\}}[\fuv(i,2) - \fuv(0,2)] \\
&= \frac{1}{r+1} - \frac{r}{r+1} + (r-1) \cdot \frac{1}{r+1} = 0.
\end{align*}

Now, we analyze the performance of the deterministic greedy algorithm on $\fuv$.  We suppose, without loss of generality, that the algorithm considers $u$ before $v$.  When $u$ is considered, we have $\vs = \vec{0}$ and $\marg{\fuv}{i}{u}(0,0) = \frac{1}{r+1}$ for all $i \in \{1,\ldots,k\}$, and so the algorithm sets $s_{u} = 1$.  In the next iteration, we have $\marg{\fuv}{i}{v}(1,0) = 0$ for all values $i \in \{1,\ldots,k\}$, and so the algorithm set $s_{v} = 1$ and returns $\vs=(1,1)$.  We then have $\fuv(\vs) = \frac{1}{r+1}$, but $\fuv(2,2) = 1$.  As in previous examples, we can easily obtain a function over ground sets of arbitrary size by summing the values of several different functions $\fuv$.
\end{example}

\section{A Randomized Greedy Algorithm}
\label{sec:greedy-algorithm}

In this section we consider the performance of a simple randomized greedy algorithm for maximizing a $k$-set function that is submodular in every orthant and $k$-wise monotone.  Our algorithm is inspired by the algorithm of Buchbinder et al.~\cite{buchbinder12:focs} for unconstrained submodular maximization.  It begins with the initial solution $\vs = \vec{0}$ and considers elements of the ground set $U$ in some arbitrary order, permanently setting $s_{e}$ to some value $i \in \{1,\ldots,k\}$, based on the marginal increase in $f$ that this yields.  Specifically, the algorithm randomly sets $s_{e} = i$ with probability proportional to the resulting marginal increase $\marg{f}{i}{e}(\vs)$ in $f$ with respect to the current solution $\vs$.  If $\marg{f}{i}{e}(\vs) < 0$, we set $s_{e} = i$ with probability $0$.  Note that Theorem \ref{thm:ando-properties} shows that we cannot have $\marg{f}{i}{e}(\vs) < 0$ for all $i$, but it may be the case that 
$\marg{f}{i}{e}(\vs) = 0$ for all $i$.  In this case, we set $s_{e} = 1$.

\begin{algorithm} Randomized Greedy \hfill
\renewcommand{\algorithmicforall}{\textbf{for each}}
\newcommand{\KwTo}{\textbf{to }}
\begin{algorithmic}
\State $\vs \gets \vec{0}$ 
\ForAll{$e \in U$}
\For{$i = 1$ \KwTo $k$}
\State $y_i \gets \mmax(0, \marg{f}{i}{e}(\vs))$
\EndFor
\State $\beta = \sum_{i = 1}^k y_i$
\If{$\beta \neq 0$}
\State Let $q \in \{1,\ldots,k\}$ be chosen randomly, with $\Pr[i = \ell] = \frac{x_\ell}{\beta}$ for all $\ell \in \{1,\ldots,k\}$.
\State $s_e \gets q$
\Else
\State $s_e \gets 1$
\EndIf
\EndFor
\State \Return $\vs$
\end{algorithmic}
\label{alg:alg1}
\end{algorithm}

\begin{theorem}
  \label{thm:greedy-random}
Let $f : \{0,\ldots,k\}^{U}$ be a $k$-set function that is submodular in every orthant and $k$-wise monotone, where $k \ge 2$.  Let $\opt$ be orthant of $U$ that maximizes $f$ and let $\vs$ be the orthant produced by the randomized greedy algorithm.  Then,
\[\left(1 + \sqrt{\frac{k}{2}}\right)\expect[f(\vs)]\ \ge\ f(\opt).\]
\end{theorem}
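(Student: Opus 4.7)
The plan is to follow the hybrid‑path telescoping used in the proof of Theorem~\ref{thm:greedy-det}, but now comparing \emph{expected} per‑step changes in order to account for the randomisation. As before, let $\si{j}$ denote the (random) partial solution after $j$ elements have been processed, $\ui{j}$ the set of processed elements, and $\oi{j} = \restrict{\opt}{U\setminus\ui{j}} + \si{j}$ the hybrid orthant, so that $\oi{0}=\opt$ and $\oi{n}=\si{n}=\vs$. I would prove the conditional per‑step inequality
\[
\expect\bigl[f(\oi{j}) - f(\oi{j+1})\,\big|\,\si{j}\bigr] \,\le\, \sqrt{k/2}\cdot\expect\bigl[f(\si{j+1}) - f(\si{j})\,\big|\,\si{j}\bigr],
\]
then take unconditional expectations and telescope to get $f(\opt) - \expect[f(\vs)] \le \sqrt{k/2}\cdot\expect[f(\vs)]$, which is equivalent to the claim.

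To set up the per‑step analysis, fix $\si{j}$ and let $e$ be the element processed in the $(j{+}1)$‑th phase. Put $\vt = \restrict{\opt}{U\setminus\ui{j+1}} + \si{j}$, and for each $i\in\{1,\ldots,k\}$ set $a_i = \marg{f}{i}{e}(\vt)$ and $Y_i = \marg{f}{i}{e}(\si{j})$, so the algorithm's $y_i = \mmax(0,Y_i)$. Writing $p = o_e$, the change in $f(\oi{})$ when the algorithm sets $s_e=q$ is $a_p-a_q$ while the change in $f(\vs)$ is $Y_q$. Lemma~\ref{lem:submodularity-orthant} applied inside the orthant containing both $\si{j}+i\cdot\unit{e}$ and $\vt+i\cdot\unit{e}$ gives $a_i \le Y_i \le y_i$ for all $i$, and $k$‑wise monotonicity at $\vt$ gives $\sum_i a_i \ge 0$. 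The degenerate case $\beta:=\sum_i y_i = 0$ is easy: then every $Y_i\le 0$ forces every $a_i\le 0$, which combined with $\sum_i a_i\ge 0$ forces $a_i=0$ throughout, and both sides of the per‑step inequality vanish. Otherwise $\beta>0$, and since $q=\ell$ is picked with probability $y_\ell/\beta$ and $y_\ell Y_\ell = y_\ell^2$, the per‑step inequality reduces to the purely algebraic claim
\[
\beta a_p - \sum_{\ell=1}^{k} y_\ell a_\ell \,\le\, \sqrt{k/2}\cdot\sum_{\ell=1}^{k} y_\ell^2,
\]
to be proved for non‑negative $y_\ell$ and arbitrary $a_\ell$ satisfying $a_\ell\le y_\ell$ and $\sum_\ell a_\ell \ge 0$.

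This algebraic inequality is the heart of the proof and its main obstacle. The plan is to substitute $b_\ell := y_\ell - a_\ell \ge 0$, under which the remaining constraint becomes $\sum_\ell b_\ell \le \beta$ and the inequality rewrites as
\[
\beta y_p + (y_p - \beta)b_p + \sum_{\ell\ne p} y_\ell b_\ell \,\le\, (1+\sqrt{k/2})\sum_\ell y_\ell^2.
\]
Since $y_p-\beta\le 0$, the left‑hand side is maximised over feasible $b$ by taking $b_p=0$ and concentrating the mass $\beta$ on the index $\ell^*\ne p$ that maximises $y_{\ell^*}$, so it suffices to prove
\[
\beta(y_p + y_{\ell^*}) \,\le\, (1+\sqrt{k/2})\sum_\ell y_\ell^2.
\]
A smoothing argument (that minimising $\sum_\ell y_\ell^2$ under fixed $y_p + y_{\ell^*}$ and fixed $\sum_{\ell\ne p,\ell^*} y_\ell$ forces $y_p = y_{\ell^*}$ and equality of the remaining $y_\ell$'s) reduces the worst case to $y_p = y_{\ell^*} = A$ and $y_\ell = B$ for $\ell\ne p,\ell^*$. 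A one‑variable calculus check then pins the extremal ratio at $B/A = 2(\sqrt{k/2}-1)/(k-2)$, where it equals $1+\sqrt{k/2}$ exactly (the case $k=2$, having no remaining $y_\ell$, is handled separately and gives ratio $2=1+\sqrt{2/2}$). This completes the proof.
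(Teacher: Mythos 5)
Your proposal is correct, and the overall reduction is the one the paper uses: the same hybrid sequence $\oi{j} = \restrict{\opt}{U\setminus\ui{j}}+\si{j}$, the same per-step conditioning, the same observation that submodularity in each orthant gives $a_i \le y_i$ while $k$-wise monotonicity gives $\sum_i a_i \ge 0$, and the same per-step expectation formulas, leading to precisely the paper's inequality $\sum_{i\ne p}y_i(a_p-a_i)\le\sqrt{k/2}\sum_i y_i^2$ (Lemma~\ref{lem:expected-increase}). Where you diverge is in how this final algebraic inequality is verified. The paper maximizes over $a$ directly as a linear program, identifies the extreme point $a^*_i=y_i$ for $i\ne\ell$ and $a^*_\ell=-\sum_{i\ne\ell}y_i$, and then certifies the resulting inequality by exhibiting it as an explicit sum of squares (equation~\eqref{eq:goal-squared2}). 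You instead substitute $b_\ell=y_\ell-a_\ell$, maximize the linear objective over the simplex $\{b\ge 0,\ \sum_\ell b_\ell\le\beta\}$ to reduce to $\beta(y_p+y_{\ell^*})\le(1+\sqrt{k/2})\sum_\ell y_\ell^2$, then use a convexity smoothing argument (equalizing $y_p=y_{\ell^*}$ and the remaining coordinates) to reduce to a one-parameter ratio whose maximum you pin down by elementary calculus. Both routes are sound; your smoothing plus calculus argument is perhaps more discoverable (it explains where the $\sqrt{k/2}$ comes from by solving the worst-case ratio), while the paper's SOS identity is a self-contained algebraic certificate that requires no optimization argument. You also correctly handle the degenerate case $\beta=0$ and the base case $k=2$, which the paper glosses over.

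One small point worth making explicit if you write this up: after smoothing, you should drop the side condition that $\ell^*$ is the index of the largest $y_\ell$ and simply prove $\beta(y_p+y_\ell)\le(1+\sqrt{k/2})\sum_i y_i^2$ for \emph{every} $\ell\ne p$; the smoothing step (fixing $y_p+y_\ell$ and $\sum_{i\ne p,\ell}y_i$, then minimizing $\sum_i y_i^2$) is cleanest without the maximality constraint, and it costs nothing since proving the inequality for all $\ell$ certainly proves it for the maximizer.
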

\begin{proof}
As in the analysis of the deterministic greedy algorithm, we considers 2 sequences of $n$ solutions.  Let $\si{j}$, and $\oi{j}$ be defined as in the proof of Theorem \ref{thm:greedy-det}, and note that $\vs$ (and hence each $\si{j}$) is now a random variable depending on the random choices made by the algorithm.  In Lemma \ref{lem:expected-increase}, we bound the expected decrease $\expect[f(\oi{j}) - f(\oi{j+1})]$ relative to the increase $\expect[f(\si{j+1}) - f(\si{j})]$ in each iteration.  Specifically, we show that
\begin{equation}
\label{eq:greedy-main}
\expect[f(\oi{j}) - f(\oi{j+1})]\ \le\ \sqrt{\frac{k}{2}}\expect[f(\si{j+1}) - f(\si{j})]
\end{equation}
for all $j$.  Summing the resulting inequalities for $j = 0$ to $n$, we then obtain
\begin{equation*}
\sum_{j = 0}^n\expect[f(\oi{j}) - f(\oi{j+1})] 
\le \sqrt{\frac{k}{2}}\sum_{j = 0}^n\expect[f(\si{j+1}) - f(\si{j})],
\end{equation*}
which simplifies to
\begin{equation*}
\expect[f(\oi{0})] - \expect[f(\oi{n})] 
\le \sqrt{\frac{k}{2}}\left(\expect[f(\si{n})] - \expect[f(\si{0})]\right) 
\le \sqrt{\frac{k}{2}}\expect[f(\si{n})].
\end{equation*}
The theorem then follows from the definitions $\oi{0} = \opt$, and $\si{n} = \oi{n} = \vs$.
\end{proof}

We now show that inequality \eqref{eq:greedy-main} must hold.

\begin{lemma} For any $0\leq j\leq n$, 
\[\expect[f(\oi{j}) - f(\oi{j+1})] \le \sqrt{\frac{k}{2}}\expect[f(\si{j+1}) - f(\si{j})].\]
  \label{lem:expected-increase}
\end{lemma}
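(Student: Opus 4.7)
The plan is to condition on $\si{j}$ and on the element $e$ considered in iteration $j+1$. As in the proof of Theorem~\ref{thm:greedy-det}, define $\vt = \restrict{\opt}{U \setminus \ui{j+1}} + \si{j}$, write $p = o_e$, and let $q$ denote the random value chosen by the algorithm, so that $\oi{j} = \vt + p\unit{e}$ and $\oi{j+1} = \vt + q\unit{e}$. Set $a_i = \marg{f}{i}{e}(\vt)$, $\tilde{y}_i = \marg{f}{i}{e}(\si{j})$, and $y_i = \max(0, \tilde{y}_i)$. Submodularity in every orthant (applied to the orthant containing both $\si{j} + i\unit{e}$ and $\vt + i\unit{e}$) gives $\tilde{y}_i \ge a_i$, and hence $y_i \ge a_i$; and $k$-wise monotonicity at $\vt$ yields $\sum_i a_i \ge 0$.

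Writing $\beta = \sum_i y_i$ and $Q = \sum_i y_i^2$, when $\beta > 0$ the random choice rule gives
\[
\expect[f(\oi{j}) - f(\oi{j+1}) \mid \si{j}] = a_p - \beta^{-1}\sum_i y_i a_i, \quad \expect[f(\si{j+1}) - f(\si{j}) \mid \si{j}] = Q/\beta.
\]
The case $\beta = 0$ forces, via $k$-wise monotonicity, all $\tilde{y}_i$ and hence all $a_i$ to vanish, making the inequality trivial. Multiplying through by $\beta$, the lemma reduces to the purely algebraic inequality
\[
\beta a_p - \sum_i y_i a_i \ \le\ \sqrt{k/2}\cdot Q,
\]
subject to $y_i \ge 0$, $a_i \le y_i$, and $\sum_i a_i \ge 0$. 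I would prove this in two stages. First, letting $y_{i^*} = \max_{i \ne p} y_i$, I would establish the ``LP-tight'' bound $\beta a_p - \sum_i y_i a_i \le \beta(y_p + y_{i^*}) - Q$ by writing $z_i = y_i - a_i \ge 0$, using $\sum_i a_i \ge 0$ to obtain $\sum_{i \ne p} z_i \le \beta - y_p + a_p$, then bounding $\sum_{i \ne p} y_i z_i \le y_{i^*}\sum_{i \ne p} z_i$, and finally eliminating $a_p$ via $a_p \le y_p$ (noting that the resulting coefficient $\beta - y_p + y_{i^*}$ is non-negative). Second, I would show $\beta(y_p + y_{i^*}) \le (1 + \sqrt{k/2})\,Q$ by splitting $\beta = (y_p + y_{i^*}) + r$ with $r = \sum_{j \ne p, i^*} y_j \ge 0$, and applying Cauchy--Schwarz: $y_p^2 + y_{i^*}^2 \ge (y_p + y_{i^*})^2/2$, and for $k \ge 3$, $\sum_{j \ne p, i^*} y_j^2 \ge r^2/(k-2)$ (the case $k = 2$ is trivial since then $r = 0$). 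Substituting makes the desired inequality a quadratic in $r$ whose discriminant evaluates to exactly zero, so it holds for all $r \ge 0$. Taking an unconditional expectation then gives the lemma.

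The main obstacle is the first stage. The naive per-coordinate bound $a_i \ge y_i - \beta$ (which follows from $a_j \le y_j$ and $\sum_j a_j \ge 0$) yields only $\beta a_p - \sum_i y_i a_i \le \beta^2 - Q$, and combined with Cauchy--Schwarz $\beta^2 \le kQ$ this produces a factor of $k$ rather than the desired $1 + \sqrt{k/2}$. The essential point is that among the $a_i$'s, only a \emph{single} coordinate can simultaneously be pushed to a large negative value while the others remain near $y_i$; capturing this requires combining the joint constraint $\sum_i a_i \ge 0$ with the ordering $y_i \le y_{i^*}$ for $i \ne p$, rather than treating each $a_i$'s lower bound separately. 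Once this refined inequality is in place, the Cauchy--Schwarz step is routine.
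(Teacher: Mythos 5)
Your argument is correct, and it reaches the same key algebraic inequality as the paper but by a different route. Both proofs condition on $\si{j}$, compute the two conditional expectations, and reduce to a pure inequality in the $y_i$'s and $a_i$'s subject to $a_i \le y_i$ and $\sum_i a_i \ge 0$. The paper then treats this as a linear program in the $a_i$'s, argues that an optimal extreme point has $a_p^* = y_p$, $a_i^* = y_i$ for all $i \ne \ell$, and $a_\ell^* = -\sum_{i\ne\ell} y_i$ for a single index $\ell \ne p$, and finally verifies the resulting quadratic bound in the $y_i$'s by exhibiting an explicit sum-of-squares decomposition. You instead eliminate the $a_i$'s directly: the substitution $z_i = y_i - a_i \ge 0$ together with $\sum_{i\ne p} z_i \le \beta - y_p + a_p$ and the ordering $y_i \le y_{i^*}$ for $i \ne p$ recovers exactly the same extremal expression $\beta(y_p + y_{i^*}) - Q$ (one can check this equals the paper's LP value with $\ell = i^*$), and you then close the quadratic bound via Cauchy--Schwarz applied separately to $\{y_p,y_{i^*}\}$ and to the remaining $k-2$ coordinates, finishing by observing the discriminant of the resulting quadratic in $r$ vanishes. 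What the paper's approach buys is an explicit sum-of-squares certificate, which is arguably cleaner to verify and generalizes naturally to the $r$-wise case where there are $\binom{k}{r}$ sum constraints; what yours buys is that it avoids invoking LP extreme-point structure and avoids having to guess the SOS decomposition, replacing both with routine algebraic estimates. Both proofs are complete and correct.
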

\begin{proof}
Let $e$ be the element of $U$ considered by the randomized greedy algorithm in
the $(j+1)$th phase, and let $\ui{j}$ and $\oi{j}$ be defined as in the proof of Theorems \ref{thm:greedy-random} and \ref{thm:greedy-det}.  We condition on an arbitrary, fixed value for both $\si{j}$, $\oi{j}$, and consider the expectation over choices the algorithm makes for $e$. Because our result will hold for an arbitrary $\si{j}$ or $\oi{j}$ it then extends to the expectation over the first $j$ choices made by the algorithm.

As in the proof of Lemma \ref{lem:greedy-det}, we 
define the solution $\vt = \restrict{\vo}{\ui{j-1}} + \si{j-1}$, and
set $a_i = \marg{f}{i}{e}(\vt)$ for $1 \le j \le k$.   Let the values $y_{i}$ be defined as in the algorithm.  Then, as in the proof of Lemma \ref{lem:greedy-det}, submodularity of $f$ in every orthant implies that
\begin{equation}
\label{eq:constraint-1}
a_{i} \le y_{i}\text{ for every }i \in \{1,\ldots,k\}.
\end{equation}
Moreover, $r$-wise monotonicity of $f$ implies that 
\begin{equation}
\label{eq:constraint-2}
\sum_{i\in I} a_i \ge 0 \text{ for all } I \in \binom{\{1,\ldots,k\}}{r}.
\end{equation}
Finally, by the construction of Algorithm \ref{alg:alg1}, we have $y_{i} \ge 0$ for each $1\leq i\leq k$.

Now, let suppose that in the optimal solution $o_{e} = p$ but the greedy algorithm sets $s_{e} \gets q$.  Then, we have $f(\oi{j}) = f(\vt) + \marg{f}{p}{e}(\vt)$ and $f(\oi{j+1}) = f(\vt) + \marg{f}{q}{e}(\vt)$, and so, as in the proof of Lemma \ref{lem:greedy-det},
\begin{align*}
f(\oi{j}) - f(\oi{j+1}) &= \marg{f}{p}{e}(\vt) - \marg{f}{q}{e}(\vt) = a_p - a_q, \\
\intertext{and}
f(\si{j+1}) - f(\si{j}) &= \marg{f}{q}{e}(\si{j}) = y_q.
\end{align*}
For any given value $q$, the probability that the greedy algorithm makes such a choice
is precisely $y_q/\beta$, and so
\begin{align*}
\expect[f(\si{j+1}) - f(\si{j})] &= \frac{1}{\beta}\sum_{i} y_i^2, \\
\intertext{and}
\expect[f(\oi{j}) - f(\oi{j+1})] &= \frac{1}{\beta}\sum_{i} y_i(a_p - a_i)
= \frac{1}{\beta}\sum_{i \neq p} y_i(a_p - a_i).
\end{align*}
In order to prove the lemma it is thus sufficient to show that 
\begin{equation}
\label{eq:goal}
\sum_{i \neq p} y_j(a_p - a_i) \le \sqrt{\frac{k}{2}} \sum_{i} y_i^2.
\end{equation}
For any value of $y_1,\ldots,y_k$, the left hand side of \eqref{eq:goal} is upper bounded by the optimal value of the following linear program in $a_{1},\ldots,a_{k}$, whose constraints are given by \eqref{eq:constraint-1} and \eqref{eq:constraint-2}:
\begin{alignat*}{2}
\text{maximize }\ & \sum_{i \neq p} y_i(a_p - a_i) \\
\text{subject to }\ & a_i \le y_i, & & \text{for } 1 \le i \le k \\
&\sum_{i \in I} a_{i} \ge 0, & & \text{for all } I \in \binom{\{1,\ldots,k\}}{r}
\end{alignat*}
We consider an optimal, extreme-point solution $a^{*}_{1},\ldots,a_{k}^{*}$ for this program.  We first note that by increasing $a_p$ we cannot violate the final constraint and can only increase the objective, and so we may assume that $a^{*}_{p} = y_p$.  Of the remaining $k$ constraints, $k-1$ must be tight, of which $k-2$ must be of the first type.  Hence, for all $i$ except at most 1 value $\ell \neq p$, we in fact have $a^{*}_i = y_i$.  This accounts for $k - 1$ total tight constraints.  The final tight constraint must imply either $a^{*}_\ell = y_{\ell}$ or $\sum_{i} a^{*}_i = 0$.  Because $a^{*}_{i} = y_i$ for all $i \neq \ell$, the latter is equivalent to $a^{*}_{\ell} = -\sum_{i \neq \ell} y_i$.  Moreover, because $y_i \ge 0$ for all $i$, setting $a^{*}_\ell = -\sum_{i \neq \ell}y_i$ always gives an objective value at least as large as setting $a^{*}_\ell = y_\ell$.  Thus, we can characterize the optimal solution to this linear program by $a^*_i = y_i$ for all $i \neq \ell$, and $a^*_{\ell} = -\sum_{i\neq \ell} y_i$, where $\ell$ is some value distinct from $p$.

Returning to \eqref{eq:goal}, we have
\begin{align*}
\sum_{i \neq p} y_i(a_p - a_i) &\le 
\sum_{i \neq p} y_i(a^*_p - a^*_i) \\
&= \sum_{i \neq p,\ell} y_i(y_p - y_i) +
y_\ell\left(y_p + \sum_{i \neq \ell} y_i\right) \\
&= 
2 y_\ell y_p + \sum_{i \neq p,\ell}[y_\ell y_i + y_p y_i - y^2_i],
\end{align*}
for any $y_1,\ldots,y_k \ge 0$.  In order to prove \eqref{eq:goal} it then suffices to show that
\begin{equation}
\label{eq:goal-2}
0 \le \alpha \sum_{i} y_i^2 - 2 y_\ell y_p - \sum_{i \neq p,\ell}[y_\ell y_i +
y_p y_i - y^2_i],
\end{equation}
where $\alpha = \sqrt{\frac{k}{2}}$.  This follows directly from the fact that the right hand side of \eqref{eq:goal-2} can be written as the following sum of squares:
\begin{equation}
\label{eq:goal-squared2}
(y_\ell - y_p)^2 
+ \sum_{j \neq o,\ell}\left(\sqrt{\frac{\alpha - 1}{k - 2}}y_\ell - \sqrt{\frac{\alpha + 1}{2}}y_i\right)^2 
 + \sum_{j \neq o,\ell}\left(\sqrt{\frac{\alpha - 1}{k - 2}}y_p - \sqrt{\frac{\alpha + 1}{2}}y_i\right)^2. 
\end{equation}

In order to verify that this is the case, we note that
\begin{equation*}
(y_\ell - y_p)^2 = y_\ell^2 - 2y_\ell y_p + y_p^2
\end{equation*}
and
\begin{align*}
\left(\sqrt{\frac{\alpha - 1}{k - 2}}y_\ell - \sqrt{\frac{\alpha + 1}{2}}y_i\right)^2 
 &=
\frac{\alpha - 1}{k-2}y_\ell^2 - 2\sqrt{\frac{(\alpha - 1)(\alpha + 1)}{2(k - 2)}}y_\ell y_i + \frac{\alpha+1}{2}y_i^2 \\
&= \frac{\alpha - 1}{k-2}y_\ell^2 - 2\sqrt{\frac{\alpha^2 - 1}{2(k - 2)}}y_\ell y_i + \frac{\alpha+1}{2}y_i^2 \\
&= \frac{\alpha - 1}{k-2}y_\ell^2 - 2\sqrt{\frac{\frac{k}{2} - 1}{2(k - 2)}}y_\ell y_i + \frac{\alpha+1}{2}y_i^2 \\
&= \frac{\alpha - 1}{k-2}y_\ell^2 - 2\sqrt{\frac{\frac{k - 2}{2}}{2(k - 2)}}y_\ell y_i + \frac{\alpha+1}{2}y_i^2 \\
&= \frac{\alpha - 1}{k-2}y_\ell^2 - 2\sqrt{\frac{1}{4}}y_\ell y_i + \frac{\alpha+1}{2}y_i^2 \\
&= \frac{\alpha - 1}{k-2}y_\ell^2 - y_\ell y_i + \frac{\alpha+1}{2}y_i^2, 
\intertext{and, similarly,}
\left(\sqrt{\frac{\alpha - 1}{k - 2}}y_p - \sqrt{\frac{\alpha + 1}{2}}y_i\right)^2
&= \frac{\alpha - 1}{k-2}y_p^2 - y_p y_i + \frac{\alpha+1}{2}y_i^2
\end{align*}
Thus, \eqref{eq:goal-squared2} is equal to
\begin{align*}
& y_\ell^2 - 2 y_\ell y_p + y_p^2 
+ \sum_{i \neq p,\ell} \left[\frac{\alpha - 1}{k-2}y_\ell^2 - y_\ell y_i + \frac{\alpha+1}{2}y_i^2\right] 
+ \sum_{i \neq p,\ell} \left[\frac{\alpha - 1}{k-2}y_p^2 - y_p y_i + \frac{\alpha+1}{2}y_i^2\right] 
\\
&= y_\ell^2 - 2 y_\ell y_p + y_p^2 + (\alpha - 1)y_\ell^2 + (\alpha - 1)y_p^2 
- \sum_{i \neq p,\ell} \left[y_\ell y_i - \frac{\alpha+1}{2}y_i^2\right]
- \sum_{i \neq p,\ell} \left[y_p y_i - \frac{\alpha+1}{2}y_i^2\right] 
\\
&= y_\ell^2 - 2 y_\ell y_p + y_p^2 + (\alpha - 1)y_\ell^2 + (\alpha - 1)y_p^2 
- \sum_{i \neq p,\ell} \left[y_\ell y_i + y_py_i - (\alpha+1)y_i^2\right]
\\
&= \alpha y_\ell^2 + \alpha y_p^2 - 2y_\ell y_p+ \alpha \sum_{i \neq p, \ell} y_i^2 
- \sum_{i \neq p,\ell} \left[y_\ell y_i + y_py_i - y_i^2\right] 
\\
&= \alpha \sum_{i} y_i^2 - 2y_\ell y_p - \sum_{i \neq p,\ell} \left[y_\ell y_i + y_py_i - y_i^2\right]. \qedhere
\end{align*}
\end{proof}

The guarantees we obtain for the randomized greedy algorithm are better than for
the deterministic greedy algorithm on $r$-wise monotone $k$-set functions only
when $k$ is small or $r$ is large.  While we do not have a tight
example for the randomized greedy algorithm on $r$-wise monotone $k$-set
functions for every fixed value of $r$ and $k$, the following example confirms
that the randomized algorithm can indeed perform worse than the deterministic
algorithm for $k$-submodular (i.e. pairwise monotone) functions, once $k$ grows
large enough.  This behavior is somewhat unintuitive, as the randomized greedy
algorithm has an expected approximation ratio of $1/2$ for bisubmodular
functions, while the deterministic greedy algorithm has an approximation ratio
of only $1/3$.

\begin{example}
Consider the weighted set-coverage function $\fuv : \{0,\ldots,k\}^{\{u,v\}} \to \posreals$ given as follows.  We have a universe $\{a,b\}$ where $a$ has weight $1$ and $b$ has weight $\gamma = \frac{1}{\sqrt{k-1}}$.  Additionally, we have sets $S_{1} = \{a\}$ and $S_{i} = \{b\}$ for every $2 \le i \le k$, and $T_{i} = \{b\}$, for every $1 \le i \le k$.  The value of $\fuv(x_{u},x_{v})$ is then simply the total weight of all elements in $S_{u} \cup T_{v}$.  The function induced by $\fuv$ and any orthant is then a weighted set coverage function, and so is submodular.  Moreover, all marginals of $\fuv$ are non-negative and so $\fuv$ is trivially $r$-wise monotone for any $r$.

We now consider the performance of the randomized greedy algorithm on $\fuv$.  We suppose, without loss of generality, that the greedy algorithm considers $u$ before $v$.  Initially we have $\vs = \vec{0}$, and in the first phase, the algorithm sets $s_{u} \gets 1$ with probability $\frac{1}{1 + (k - 1)\gamma}$ and for each $2 \le i \le k$, sets $s_{u} \gets i$ with probability $\frac{\gamma}{1 + (k - 1)\gamma}$.  In the next step, the algorithm considers $v$.  We note that all the sets $T_{i}$ are identical, and so the algorithm's particular choice in this phase does not affect the final value of the function.  The solution $\vs$ produced by the algorithm has value $1 + \gamma$ if $s_{u} = 1$ and $\gamma$ otherwise.  Thus, the expected value of solution produced by the algorithm is:
\begin{equation*}
\frac{1+\gamma + (k - 1)\gamma^{2}}{1 + (k-1)\gamma}
= \frac{2 + \gamma}{1 + (k - 1)\gamma}.
\end{equation*}
The optimal value of $\fuv$ is $1 + \gamma$ and so the expected approximation ratio of the randomized greedy algorithm on $\fuv$ is
\begin{equation*}
\alpha = \frac{2 + \gamma}{1 + (k - 1)\gamma}\cdot\frac{1}{1 + \gamma} 
=
\frac{2 + \gamma}{1 + (k - 1)\gamma + \gamma + (k-1)\gamma^{2}} 
=
\frac{2 + \gamma}{2 + k\gamma}.
 \end{equation*}
In particular, for all $k \ge 21$, we have $\alpha < 1/3$.  For large $k$, $\alpha$ is approximately $1/\left(1 + \sqrt{\frac{k}{4}}\right)$.  In the appendix, we show that the randomized greedy algorithm does indeed attain a similar, improved ratio for $k$-submodular functions.
\end{example}

\section{Conclusion}
\label{sec:conclusion}
In the preceding sections we have considered the problem of maximizing $k$-submodular functions by both a random partition and two simple simple greedy algorithms.  In the case of maximizing a bisubmodular function, we obtained the same approximation ratios as those already known in the submodular case: $1/4$ for the naive random solution~\cite{Vondrak11:max-non-monotone} and $1/2$ via a randomized greedy approach \cite{buchbinder12:focs}.  We can make this correspondence more explicit by considering the following embedding of a submodular function into a bisubmodular function.  Given a submodular function $g:2^U\rightarrow\posreals$, we consider the biset function
$f:3^U\rightarrow\posreals$ defined by
\begin{equation}
\label{eq:embedding}
f(S,T)\ \defequals\ g(S)+g(U\setminus T)-g(U).
\end{equation}
This embedding has been studied by Fujishige and Iwata, who show that the function $f$ is bisubmodular and has the following property: if $(S,T)$ is a minimizer (maximizer) of $f$ then both $S$ and $U\setminus T$ are minimizers (maximizers) of $g$~\cite{Fujishige06:bisubmodular}. Thus, exact 2-submodular function minimization (maximization) is a generalization of 1-submodular function minimization (maximization).  We can in fact show a stronger result: that this embedding preserves approximability.

Suppose that some algorithm gives a $\alpha$-approximation for bisubmodular maximization.  Then, consider an arbitrary submodular function $g$ and let $f$ be the embedding of $g$ defined as in \eqref{eq:embedding}.  Let $O=(O_1,O_2)$ be a maximizer $f$, and suppose that the algorithm returns a solution $S=(S_1,S_2)$.  Then, since $f$ is pairwise monotone, we can greedily extend $S$ to a partition $S'=(S'_1,S'_2)$ of $U$.  Similarly, we can assume without loss of generality that $O$ is a partition of $U$.  Then, we have $f(U \setminus S'_2) = f(S'_1)$ and $f(U \setminus O_2) = f(O_2)$, and so
\begin{align*}
g(S'_1) &= \frac{1}{2}\left(g(S_1') + g(U \setminus S_1') \right) \\
&= \frac{1}{2}\left(f(S_1',S_2') + g(U)\right) \\
&\ge \frac{1}{2}\left(\alpha f(O_1,O_2) + g(U)\right) \\
&= \frac{1}{2}\left(\alpha g(O_1) + \alpha g(U \setminus O_2) + (1 - \alpha)g(U) \right) \\
&\ge \frac{1}{2}\left(\alpha g(O_1) + \alpha g(U \setminus O_2) \right) \\
&= \alpha g(O_1).
\end{align*}
Since $O_1$ is a maximizer of $g$, the resulting algorithm is an $\alpha$-approximation for maximizing $g$.  Hence, the $1/2 + \epsilon$ inapproximability results of \cite{Vondrak11:max-non-monotone,Vondrak12:stoc} hold for bisubmodular maximization as well, in both the value oracle setting and under the assumption that $NP \neq RP$.

The embedding \eqref{eq:embedding} also allows us to provide new intuition for
the performance of the randomized greedy algorithm for submodular maximization
considered by Buchbinder et al.~\cite{buchbinder12:focs}.  This algorithm maintains 2 solutions, $S_1$ and $S_2$ which are initially $\emptyset$ and $U$.  At each step, it considers an element $e$, and either adds $e$ to $S_1$ or removes $e$ from $S_2$, with probability proportional to the resulting increase in the submodular function in either case.

In comparison, we consider the case in which we embed a submodular function $g$
into a bisubmodular function $f$ using \eqref{eq:embedding} and then run the
greedy algorithm of Section \ref{sec:greedy-algorithm} on $f$.  Suppose at some
step we have a current solution $T=(T_1,T_2)$ and we consider element $e$, and
define $S_1 = T_1$ and $S_2 = U \setminus T_2$.  The algorithm will add $e$ to
either $T_1$ or $T_2$ with probability proportional to the resulting increase in
$f$.  In the first case, this increase is precisely $g(T_1 + e) - g(T_1) = g(S_1
+ e) - g(S_1)$, and adding $e$ to $T_1$ corresponds to adding $e$ to $S_1$.   In
the second case this increase is precisely $g(U \setminus T_2) - g(U \setminus
(T_2 + e)) = g(S_2) - g(S_2 - e)$ and adding $e$ to $T_1$ corresponds to
removing $e$ from $S_1$. Thus, the operation of the algorithm of  Buchbinder et
al.~\cite{buchbinder12:focs} may be viewed as that of the natural, straightforward
randomized greedy algorithm presented in Section \ref{sec:greedy-algorithm}, viewed through the lens of the embedding
\eqref{eq:embedding}.  

An interesting open question is whether the symmetry gap technique from \cite{Vondrak2009,Vondrak12:stoc} can be generalized to obtain hardness results for $k$-submodular maximization for $k
\ge 3$, and, more generally, for maximizing $k$-set functions that are
submodular in every orthant and $r$-wise monotone for some $1\leq r\leq k$.  

\section*{Acknowledgments}
We are grateful to Maxim Sviridenko for many insightful conversations.


\begin{thebibliography}{10}

\bibitem{Ando02:k-submodular}
Kazutoshi Ando.
\newblock K-submodular functions and convexity of their {L}ov{\'a}sz extension.
\newblock {\em Discrete Applied Mathematics}, 122(1-3):1--12, 2002.

\bibitem{ando96:bisubmodular}
Kazutoshi Ando, Satoru Fujishige, and Takeshi Naitoh.
\newblock A characterization of bisubmodular functions.
\newblock {\em Discrete Mathematics}, 148(1-3):299--303, 1996.

\bibitem{Bilbao08:survey}
Jes\'us~M. Bilbao, Julio~R. Fern\'andez, Nieves Jim\'enez, and Jorge~J.
  L\'opez.
\newblock Survey of bicooperative games.
\newblock In Altannar Chinchuluun, Panos~M. Pardalos, Athanasios Migdalas, and
  Leonidas Pitsoulis, editors, {\em Pareto Optimality, Game Theory and
  Equilibria}. Springer, 2008.

\bibitem{Bouchet87:greedy}
Andr\'e Bouchet.
\newblock Greedy algorithm and symmetric matroids.
\newblock {\em Mathematical Programming}, 38(2):147--159, 1987.

\bibitem{buchbinder12:focs}
Niv Buchbinder, Moran Feldman, Joseph Naor, and Roy Schwartz.
\newblock A tight linear time (1/2)-approximation for unconstrained submodular
  maximization.
\newblock In {\em Proceedings of the 53rd Annual IEEE Symposium on Foundations
  of Computer Science (FOCS'12)}, pages 649--658. IEEE, 2012.

\bibitem{Chandrasekaran88:pseudomatroids}
Ramaswamy Chandrasekaran and Santosh~N. Kabadi.
\newblock Pseudomatroids.
\newblock {\em Discrete Mathematics}, 71(3):205--217, 1988.

\bibitem{cohen06:complexitysoft}
David~A. Cohen, Martin~C. Cooper, Peter~G. Jeavons, and Andrei~A. Krokhin.
\newblock The {C}omplexity of {S}oft {C}onstraint {S}atisfaction.
\newblock {\em Artificial Intelligence}, 170(11):983--1016, 2006.

\bibitem{Vondrak12:stoc}
Shahar Dobzinski and Jan Vondr\'ak.
\newblock From query complexity to computational complexity.
\newblock In {\em Proceedings of the 44th Annual ACM Symposium on Theory of
  Computing (STOC'12)}, pages 1107--1116. ACM, 2012.

\bibitem{Vondrak11:max-non-monotone}
Uriel Feige, Vahab~S. Mirrokni, and Jan Vondr{\'a}k.
\newblock Maximizing {N}on-monotone {S}ubmodular {F}unctions.
\newblock {\em SIAM Journal on Computing}, 40(4):1133--1153, 2011.

\bibitem{Fujishige05:submodular2nd}
Satoru Fujishige.
\newblock {\em Submodular {F}unctions and {O}ptimization}, volume~58 of {\em
  Annals of Discrete Mathematics}.
\newblock North-Holland, Amsterdam, 2nd edition, 2005.

\bibitem{Fujishige06:bisubmodular}
Satoru Fujishige and Satoru Iwata.
\newblock Bisubmodular {F}unction {M}inimization.
\newblock {\em {SIAM} Journal on Discrete Mathematics}, 19(4):1065--1073, 2005.

\bibitem{Garey79:intractability}
Michael~R. Garey and David~S. Johnson.
\newblock {\em Computers and {I}ntractability: {A} {G}uide to the {T}heory of
  {N{P}}-{C}ompleteness}.
\newblock W.H. Freeman, 1979.

\bibitem{Grotschel81:ellipsoid}
M.~Gr\"otschel, L.~Lovasz, and A.~Schrijver.
\newblock The ellipsoid method and its consequences in combinatorial
  optimization.
\newblock {\em Combinatorica}, 1(2):169--198, 1981.

\bibitem{Grotschel88:geometric}
M.~Gr\"otschel, L.~Lovasz, and A.~Schrijver.
\newblock {\em Geometric {A}lgorithms and {C}ombinatorial {O}ptimization},
  volume~2 of {\em Algorithms and Combinatorics}.
\newblock Springer, 1988.

\bibitem{huber12:ksub}
Anna Huber and Vladimir Kolmogorov.
\newblock Towards {M}inimizing $k$-{S}ubmodular {F}unctions.
\newblock In {\em Proceedings of the 2nd International Symposium on
  Combinatorial Optimization (ISCO'12)}, volume 7422 of {\em Lecture Notes in
  Computer Science}, pages 451--462. Springer, 2012.

\bibitem{hkp14:sicomp}
Anna Huber, Andrei Krokhin, and Robert Powell.
\newblock Skew {b}isubmodularity and {v}alued {C}{S}{P}s.
\newblock {\em SIAM Journal on Computing}, 43(3):1064--1084, 2014.

\bibitem{Iwata02:submodular}
S.~Iwata.
\newblock A fully combinatorial algorithm for submodular function minimization.
\newblock {\em Journal of Combinatorial Theory, Series B}, 84(2):203--212,
  2002.

\bibitem{Iwata03:scaling}
S.~Iwata.
\newblock A faster scaling algorithm for minimizing submodular functions.
\newblock {\em {SIAM} Journal on Computing}, 32(4):833--840, 2003.

\bibitem{Iwata08:sfm-survey}
Satoru Iwata.
\newblock Submodular {F}unction {M}inimization.
\newblock {\em Mathematical Programming}, 112(1):45--64, 2008.

\bibitem{Iwata01:submodular}
Satoru Iwata, Lisa Fleischer, and Satoru Fujishige.
\newblock A combinatorial strongly polynomial algorithm for minimizing
  submodular functions.
\newblock {\em Journal of the {ACM}}, 48(4):761--777, 2001.

\bibitem{Iwata13:max}
Satoru Iwata, Shin ichi Tanigawa, and Yuichi Yoshida.
\newblock Bisubmodular function maximization and extensions.
\newblock Technical Report METR 2013-16, The University of Tokyo, 2013.

\bibitem{Iwata09:simple}
Satoru Iwata and James~B. Orlin.
\newblock A {S}imple {C}ombinatorial {A}lgorithm for {S}ubmodular {F}unction
  {M}inimization.
\newblock In {\em Proceedings of the 20th {A}nnual {A}{C}{M}-{S}{I}{A}{M}
  {S}ymposium on {D}iscrete {A}lgorithms ({S}{O}{D}{A}'09)}, pages 1230--1237,
  2009.

\bibitem{Kolmogorov11:mfcs}
Vladimir Kolmogorov.
\newblock Submodularity on a tree: {U}nifying ${L}^{\sharp}$-convex and
  bisubmodular functions.
\newblock In {\em Proceedings of the 36th International Symposium on
  Mathematical Foundations of Computer Science (MFCS'11)}, volume 6907 of {\em
  Lecture Notes in Computer Science}, pages 400--411. Springer, 2011.

\bibitem{Korte07:combopt}
Bernhard Korte and Jens Vygen.
\newblock {\em Combinatorial {O}ptimization}, volume~21 of {\em Algorithms and
  Combinatorics}.
\newblock Springer, 4th edition, 2007.

\bibitem{Lee2010a}
Jon Lee, Maxim Sviridenko, and Jan Vondr\'{a}k.
\newblock {Submodular Maximization over Multiple Matroids via Generalized
  Exchange Properties}.
\newblock {\em Mathematics of Operations Research}, 35(4):795--806, 2010.

\bibitem{Lovasz83:submodular}
L\'aszl\'o Lov\'asz.
\newblock Submodular {F}unctions and {C}onvexity.
\newblock In A.~Bachem, M.~Gr\"otschel, and B.~Korte, editors, {\em
  Mathematical {P}rogramming -- {T}he {S}tate of the {A}rt}, pages 235--257,
  Berlin, 1983. Springer.

\bibitem{McCormick10:bisubmodular}
S.~Thomas McCormick and Satoru Fujishige.
\newblock Strongly polynomial and fully combinatorial algorithms for
  bisubmodular function minimization.
\newblock {\em Mathematical Programming}, 122(1):87--120, 2010.

\bibitem{Narayanan97:submodular}
H.~Narayanan.
\newblock {\em Submodular {F}unctions and {E}lectrical {N}etworks}.
\newblock North-Holland, Amsterdam, 1997.

\bibitem{Nemhauser88:optimization}
George~L. Nemhauser and Laurence~A. Wolsey.
\newblock {\em Integer and {C}ombinatorial {O}ptimization}.
\newblock John Wiley \& Sons, 1988.

\bibitem{Orlin09:faster}
James~B. Orlin.
\newblock A faster strongly polynomial time algorithm for submodular function
  minimization.
\newblock {\em Mathematical Programming}, 118(2):237--251, 2009.

\bibitem{Papadimitriou91:jcss}
Christos~H. Papadimitriou and Mihalis Yannakakis.
\newblock Optimization, {A}pproximation, and {C}omplexity {C}lasses.
\newblock {\em Journal of Computer and System Sciences}, 43(3):425--440, 1991.

\bibitem{Qi88:directed}
Liqun Qi.
\newblock Directed submodularity, ditroids and directed submodular flows.
\newblock {\em Mathematical Programming}, 42(1-3):579--599, 1988.

\bibitem{Schrijver00:submodular}
Alexander Schrijver.
\newblock A {C}ombinatorial {A}lgorithm {M}inimizing {S}ubmodular {F}unctions
  in {S}trongly {P}olynomial {T}ime.
\newblock {\em Journal of Combinatorial Theory, Series B}, 80(2):346--355,
  2000.

\bibitem{Schrijver03:CombOpt}
Alexander Schrijver.
\newblock {\em Combinatorial {O}ptimization: {P}olyhedra and {E}fficiency},
  volume~24 of {\em Algorithms and Combinatorics}.
\newblock Springer, 2003.

\bibitem{Singh12:bisubmodular}
Ajit~P. Singh, Andrew Guillory, and Jeff Bilmes.
\newblock On bisubmodular maximization.
\newblock In {\em Proceedings of the 15th International Conference on
  Artificial Intelligence and Statistics (AISTATS'12)}, volume~22 of {\em JLMR
  Workshop and Conference Proceedings}, pages 1055--1063, 2012.

\bibitem{tz12:focs}
Johan Thapper and Stanislav {\noopsort{ZZ}\v{Z}}ivn\'y.
\newblock The power of linear programming for valued {C}{S}{P}s.
\newblock In {\em Proceedings of the 53rd Annual IEEE Symposium on Foundations
  of Computer Science (FOCS'12)}, pages 669--678. IEEE, 2012.

\bibitem{tz13:stoc}
Johan Thapper and Stanislav {\noopsort{ZZ}\v{Z}}ivn\'y.
\newblock The complexity of finite-valued {C}{S}{P}s.
\newblock In {\em Proceedings of the 45th ACM Symposium on the Theory of
  Computing (STOC'13)}, pages 695--704. ACM, 2013.

\bibitem{Topkis98:Supermodularity}
Donald Topkis.
\newblock {\em Supermodularity and {C}omplementarity}.
\newblock Princeton University Press, 1998.

\bibitem{Vondrak2009}
Jan Vondr\'{a}k.
\newblock {Symmetry and Approximability of Submodular Maximization Problems}.
\newblock In {\em Proceedings of the 50th IEEE Symposium on Foundations of
  Computer Science (FOCS '09)}, pages 651--670. IEEE Computer Society, 2009.

\bibitem{Wahlstrom14:soda}
Magnus Wahlstr{\"o}m.
\newblock Half-integrality, {L}{P}-branching and {F}{P}{T} {A}lgorithms.
\newblock In {\em Proceedings of the Twenty-Fifth Annual ACM-SIAM Symposium on
  Discrete Algorithms (SODA'14)}, pages 1762--1781. SIAM, 2014.

\bibitem{wz14:soda}
Justin Ward and Stanislav {\noopsort{ZZ}\v{Z}}ivn\'y.
\newblock Maximizing bisubmodular and $k$-submodular functions.
\newblock In {\em Proceedings of the 25th Annual ACM-SIAM Symposium on Discrete
  Algorithms (SODA'14)}, pages 1468--1481. SIAM, 2014.

\end{thebibliography}

\newcommand{\noopsort}[1]{}

\appendix

\section{Improved Analysis of Algorithm \ref{alg:alg1} for $k$-Submodular Functions}
In the case that $f$ is in fact pairwise monotone (and, hence, $k$-submodular), we can prove the following stronger form of Lemma \ref{lem:expected-increase}.
\begin{lemma} Suppose that $f$ is $k$-submodular.  Then, for any $0\leq j\leq n$, 
\[\expect[f(\oi{j}) - f(\oi{j+1})] \le \alpha\expect[f(\si{j+1}) - f(\si{j})].\]
  \label{lem:expected-increase-ksub}
where $\alpha = \max(1,\sqrt{\frac{k-1}{4}})$.
\end{lemma}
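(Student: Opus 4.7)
The plan is to mimic the structure of the proof of Lemma \ref{lem:expected-increase} up to the LP reduction, and then to exploit the stronger pairwise-monotonicity constraint that comes from $k$-submodularity via Theorem \ref{thm:ando-properties}, in place of the $k$-wise monotonicity used in the original proof. Setting up as before, I condition on $\si{j}$ and $\oi{j}$, define $\vt = \restrict{\vo}{U \setminus \ui{j}} + \si{j}$, $a_i = \marg{f}{i}{e}(\vt)$, and take $y_i = \mmax(0, \marg{f}{i}{e}(\si{j}))$ for the element $e$ considered in phase $j+1$. Submodularity in every orthant still yields $a_i \le y_i$, but pairwise monotonicity replaces the single constraint $\sum_i a_i \ge 0$ by the family
\[
a_i + a_j \ge 0 \quad\text{for every } i\neq j,
\]
whose crucial structural consequence is that at most one $a_i$ can be strictly negative.

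With $p = o_e$ and $S = \sum_i y_i$, the target inequality reduces to $\sum_{i \neq p} y_i(a_p - a_i) \le \alpha \sum_i y_i^2$, and I bound the left side by the corresponding LP. Setting $a_p = y_p$ (optimal since its coefficient $\sum_{i \neq p} y_i$ is nonnegative) and letting $a_\ell = -c$ for the at-most-one possibly-negative coordinate, the constraints $a_j + a_\ell \ge 0$ and $a_j \le y_j$ together force $c \le y_m := \mmin_{j \neq \ell} y_j$, and minimizing the inner sum sets $a_j = c$ for $j \neq p,\ell$. The LP value at such an extreme point is thus
\[
y_p(S - y_p) + c \cdot \Bigl(y_\ell - \sum_{i\neq p,\ell} y_i\Bigr),
\]
maximized by $c = y_m$ when $y_\ell \ge \sum_{i\neq p,\ell} y_i$ and $c = 0$ otherwise, then further maximized over $\ell \neq p$.

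For the baseline term $y_p(S - y_p)$, the Cauchy-Schwarz-style sum-of-squares identity
\[
\sqrt{\tfrac{k-1}{4}}\sum_i y_i^2 - y_p(S - y_p) = \sum_{i\neq p}\Bigl(\sqrt{\tfrac{\alpha}{k-1}}\,y_p - \sqrt{\alpha}\,y_i\Bigr)^2
\]
delivers the $\sqrt{(k-1)/4}$ branch of $\alpha$, and is tight when $y_p = \sqrt{k-1}\,y$ with every other $y_i = y$. For the extra term, since $y_m \le y_j$ for every $j \neq \ell$ we have $(k-2)\,y_m \le \sum_{i\neq p,\ell} y_i$, so the positivity condition $y_\ell > \sum_{i\neq p,\ell} y_i$ forces $y_m < y_\ell/(k-2)$ when $k \ge 3$; combined with AM-GM bounds such as $y_m y_\ell \le (y_m^2 + y_\ell^2)/2$, the extra term can be absorbed into the slack of the Case-1 SOS above. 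The case $k = 2$ is degenerate: the extra term equals $y_p y_\ell$ exactly, and $2 y_p y_\ell \le y_p^2 + y_\ell^2$ pins down the second branch $\alpha \ge 1$ of $\max(1, \sqrt{(k-1)/4})$.

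The main obstacle is producing a single unified sum-of-squares certificate valid for every $k \ge 2$ that handles the transition at $k = 5$, where the two branches of $\alpha$ coincide. Concretely, the SOS slack in the Case-1 decomposition must be shown to dominate the worst-case contribution from the one-negative-$a_\ell$ extreme point across all $y$-configurations, including those that interpolate between the Case-1-tight regime ($y_p \gg y_i$) and the Case-2-tight regime ($y_p \approx y_\ell$).
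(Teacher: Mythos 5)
Your setup follows the paper's approach closely: you correctly replace the single $k$-wise constraint by the $\binom{k}{2}$ pairwise constraints $a_i + a_\ell \ge 0$, observe that at most one $a_\ell$ can be negative, set $a_p^\ast = y_p$, and arrive at the same parametrized extreme-point expression $y_p(S - y_p) + c\bigl(y_\ell - \sum_{i\neq p,\ell}y_i\bigr)$. Your SOS identity for the baseline term is also correct and is exactly the Cauchy--Schwarz computation the paper uses for the all-nonnegative case.

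The gap is in how you handle the branch where $a_\ell^\ast = -c < 0$. You are trying to keep the $\sqrt{(k-1)/4}$ SOS certificate and absorb the extra term $c\bigl(y_\ell - \sum_{i\neq p,\ell}y_i\bigr)$ into its slack, and you yourself flag that this absorption is the unresolved obstacle. The paper avoids this entirely by observing that in this branch a much weaker bound suffices: using $c \le y_p$ (a consequence of $a_p^\ast + a_\ell^\ast \ge 0$) and the sign condition $y_\ell \ge \sum_{i\neq p,\ell}y_i$ (forced since $c>0$ is optimal), the whole objective telescopes to
\[
y_p\!\left(y_\ell + \sum_{j\neq\ell,p}y_j\right) + y_p\!\left(y_\ell - \sum_{j\neq\ell,p}y_j\right) = 2y_p y_\ell \le y_p^2 + y_\ell^2 \le \sum_i y_i^2,
\]
which gives $\alpha = 1$ in this case. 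Since the stated $\alpha$ is $\max(1,\sqrt{(k-1)/4})$, there is no need for a single certificate that works uniformly across both cases; one simply uses whichever of the two bounds applies. Your instinct that the two branches ``coincide at $k=5$'' and must be stitched together is the source of the difficulty --- they need not be stitched, because the two cases are disjoint alternatives, each handled by its own elementary inequality. Without this observation the proof is not complete.
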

\begin{proof}
Using the same notation as in the proof of Lemma \ref{lem:expected-increase}, we shall now show 
\begin{equation}
\label{eq:goal-ksub}
\sum_{i \neq p} y_i(a_p - a_i) \le \alpha \sum_{i} y_i^2,
\end{equation}
where $\alpha = \max(1, \sqrt{\frac{k-1}{4}})$.
As in the proof of Lemma \ref{lem:expected-increase}, we note that for any value of $y_1,\ldots,y_k$, the left hand side of \eqref{eq:goal-ksub} is upper bounded by the optimal value of a linear program in $a_{1},\ldots,a_{k}$.  Now, however, because $f$ is pairwise monotone, we replace the \eqref{eq:constraint-2} with $\binom{k}{2}$ constraints of the form of $a_{i} + a_{\ell} \ge 0$.  This gives the program
\begin{alignat*}{2}
\text{maximize }\ & \sum_{i \neq p} y_j(a_p - a_i) \\
\text{subject to }\ & a_i \le y_i & & 1 \le i \le k \\
&a_{i} + a_\ell \ge 0 & & \forall \{i,\ell\} \in \binom{\{1,\ldots,k\}}{2}.
\end{alignat*}
Consider an optimal solution for this program.  We note that increasing $a_{p}$ cannot violate any constraint $a_{p} + a_{\ell} \ge 0$, and will increase the objective.  Thus, we may assume that $a^{*}_{p} = y_{p} \ge 0$.  We now consider 2 cases.

First, suppose that we have $a^{*}_{\ell} = -t < 0$ for some $\ell \in \{1,\ldots,k\}$ and some value $t > 0$.  Because $a^{*}_{i} + a^{*}_{\ell} \ge 0$ for all $i \neq \ell$, there can be at most one such $\ell$.  Moreover, we must have $a^{*}_{i} \ge t$ for all $i \neq \ell$.  For any value $i \not\in \{\ell,p\}$, we note that decreasing $a^{*}_{i}$ can only increase the objective of our linear program.  Thus, in this case, we may assume that $a^{*}_{i} = t$ for all $i \not\in \{\ell,p\}$, $a_{\ell} = -t$ and $a_{p} = y_{p}$.  We can then rewrite our objective as:
\begin{equation}
\label{eq:obj-ksub-1}
\sum_{i \neq p}y_{i}y_{p} + t\left(y_{\ell} - \sum_{j \neq \ell,p} y_{j}\right).
\end{equation}
Because $t > 0$, we must have $y_{\ell} \ge \sum_{j \neq \ell,p} y_{j}$ (otherwise, we could increase \eqref{eq:obj-ksub-1} by decreasing $t$).  Moreover, we must have $t \le y_{p}$, since otherwise we would have $a^{*}_{p} + a^{*}_{\ell} = y_{p} - t < 0$.  Hence, we have:
\begin{align*}
\sum_{i \neq p}y_{i}y_{p} + t\left(y_{\ell} - \sum_{j \neq \ell,p} y_{j}\right)
\le \sum_{i \neq p}y_{i}y_{p} + y_{p}y_{\ell} - y_{p}\sum_{j \neq \ell,p} y_{j}
=2y_{p}y_{\ell} 
\le y_{p}^{2} + y_{\ell}^{2}
\le \sum_{i} y_i^2,
\end{align*}
and we have proved \eqref{eq:goal-ksub} with $\alpha = 1$.

Next, suppose that $a_{i} \ge 0$ for all $i \in \{1,\ldots,k\}$.  Then, the objective of our program satisfies
\begin{align*}
\label{eq:obj-ksub-1}
\sum_{i \neq p} y_i(a_p - a_i)
&\le
\sum_{i \neq p} y_ia_p \\
&=
\sum_{i \neq p} y_iy_p \\
&=
\frac{1}{2\sqrt{k - 1}}\cdot2\sqrt{k-1}y_{p}\sum_{i \neq p} y_i \\
&\le
\frac{1}{2\sqrt{k-1}}\left[(k - 1)y_{p}^{2} + \left(\sum_{i \neq p}y_{i}\right)^{2}\right] \\
&\le
\frac{1}{2\sqrt{k-1}}\left[(k - 1)y_{p}^{2} + (k - 1)\sum_{i \neq p}y_{i}^{2}\right] \\
&=
\frac{\sqrt{k-1}}{2}\sum_{i}y_{i}^{2},
\end{align*}
where the second inequality follows from $a^{2} + b^{2} \ge 2ab$ for any real numbers $a$ and $b$, and third inequality follows from the Cauchy-Schwarz inequality.  Thus, we have proved \eqref{eq:goal-ksub} with $\alpha = \sqrt{\frac{k-1}{4}}$.
\end{proof}

By replacing Lemma \ref{lem:expected-increase} with Lemma \ref{lem:expected-increase-ksub}, in the proof of Theorem \ref{thm:greedy-random}, we obtain the following result.
\begin{theorem}
\label{thm:greedy-random-ksub}
Let $f : \{0,\ldots,k\}^{U}$ be a $k$-submodular set function.  Let $\opt$ be an orthant of $U$ that maximizes $f$ and let $\vs$ be the orthant of $U$ produced by the randomized greedy algorithm.  Then, 
\[\left(1 + \alpha\right)\expect[f(\vs)]\ \ge\ f(\opt),\]
for $\alpha = \max(1, \sqrt{\frac{k-1}{4}})$.
\end{theorem}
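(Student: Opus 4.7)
The plan is to mimic the proof of Theorem \ref{thm:greedy-random} essentially verbatim, but to feed in the strengthened per-step bound of Lemma \ref{lem:expected-increase-ksub} in place of Lemma \ref{lem:expected-increase}. Define the two sequences of (random) solutions $\si{j}$ and $\oi{j}$ exactly as in the proof of Theorem \ref{thm:greedy-random}: $\si{j}$ is the partial solution produced by the randomized greedy algorithm after its first $j$ phases, and $\oi{j} = \restrict{\opt}{U \setminus \ui{j}} + \si{j}$ is the hybrid solution that agrees with $\si{j}$ on already-processed elements and with $\opt$ elsewhere. In particular $\oi{0} = \opt$ and $\oi{n} = \si{n} = \vs$.

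Since $f$ is $k$-submodular, Theorem \ref{thm:ando-properties} guarantees that $f$ is both submodular in every orthant and pairwise monotone, so the hypotheses of Lemma \ref{lem:expected-increase-ksub} are met. Applying that lemma at each step yields, for every $0 \le j \le n-1$,
\[
\expect[f(\oi{j}) - f(\oi{j+1})]\ \le\ \alpha\,\expect[f(\si{j+1}) - f(\si{j})],
\]
with $\alpha = \max(1, \sqrt{(k-1)/4})$.

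Summing these $n$ inequalities and telescoping both sides gives
\[
\expect[f(\oi{0})] - \expect[f(\oi{n})]\ \le\ \alpha\bigl(\expect[f(\si{n})] - \expect[f(\si{0})]\bigr)\ \le\ \alpha\,\expect[f(\si{n})],
\]
where the last inequality uses $f(\si{0}) = f(\vec{0}) \ge 0$ by non-negativity of $f$. Substituting $\oi{0} = \opt$ and $\oi{n} = \si{n} = \vs$ then rearranges to $(1 + \alpha)\expect[f(\vs)] \ge f(\opt)$, as required.

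There is essentially no obstacle left, since all the real work has been done in Lemma \ref{lem:expected-increase-ksub}: the present theorem is just the combination of that per-step guarantee with the same telescoping argument that already drove Theorem \ref{thm:greedy-random}. The only point worth checking carefully is that the definition of $\alpha$ propagates correctly through the telescoping, which is immediate because $\alpha$ does not depend on $j$.
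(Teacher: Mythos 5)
Your proposal matches the paper's argument exactly: the paper proves this theorem by substituting Lemma \ref{lem:expected-increase-ksub} for Lemma \ref{lem:expected-increase} in the telescoping argument of Theorem \ref{thm:greedy-random}, which is precisely what you do. The details you spell out (the hybrid sequence $\oi{j}$, the summation, and the use of $f(\si{0}) \ge 0$) are all correct and consistent with the paper.
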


\end{document}